\newtheorem{lemma}{Lemma}
\newtheorem{theorem}{Theorem}
\newtheorem{corollary}{Corollary}
\newcommand{\E}{\mathbb{E}}
\title{Consumer Autonomy or Illusion? Rethinking Consumer Agency in the Age of Algorithms}
\author{Pegah Nokhiz\\
  Cornell University, Cornell Tech\\
  \texttt{pegah.nokhiz@gmail.com} \\
  \And
   Aravinda Kanchana Ruwanpathirana \\
  National University of Singapore\\
\texttt{kanchana.ruwanpathirana@gmail.com}\\}
\begin{document}



\date{}
\maketitle
\begin{abstract}
Consumer agency in the digital age is increasingly constrained by systemic barriers and algorithmic manipulation, raising concerns about the authenticity of consumption choices. Nowadays, financial decisions are shaped by external pressures like obligatory consumption, algorithmic persuasion, and unstable work schedules that erode financial autonomy. Obligatory consumption (like hidden fees) is intensified by digital ecosystems. Algorithmic tactics like personalized recommendations lead to impulsive purchases. Unstable work schedules also undermine financial planning.

Thus, it is important to study how these factors impact consumption agency. To do so, we examine formal models grounded in discounted consumption with constraints that bound agency. We construct analytical scenarios in which consumers face obligatory payments, algorithm-influenced impulsive expenses, or unpredictable income due to temporal instability. Using this framework, we demonstrate that even rational, utility-maximizing agents can experience early financial ruin when agency is limited across structural, behavioral, or temporal dimensions and how diminished autonomy impacts long-term financial well-being. Our central argument is that consumer agency must be treated as a \emph{value} (not a given) requiring active cultivation, especially in digital ecosystems. The connection between our formal modeling and this argument allows us to indicate that limitations on agency (whether structural, behavioral, or temporal) can be rigorously linked to measurable risks like financial instability. This connection is also a basis for normative claims about consumption as a value, by anchoring them in a formally grounded analysis of consumer behavior.  As solutions, we study systemic interventions and consumer education to support value deliberation and informed choices. We formally demonstrate how these measures strengthen agency.

\end{abstract}

\section{Introduction}
The ability to make informed choices about spending and saving (autonomy in consumption) is crucial for promoting individual empowerment, financial and mental health, social equity, and sustainable economic practices \cite{gordon2024consumer,rubel2020algorithms,cheong2024transparency}.  However, consumer behavior is shaped not only by individual preferences but also by social norms and peer influences. In the digital economy, algorithms and personalized marketing further complicate these dynamics. Algorithms frequently generate filter bubbles, restricting access to diverse viewpoints. Through manipulative design strategies and persuasive advertising, they steer consumers toward impulsive buying decisions, thereby diminishing their sense of agency \cite{pariser2011filterbubble,mathur2019darkpatterns}. Systemic barriers also play a critical role in limiting agency. For example, attributing the purchase of unhealthy groceries to personal laziness (overemphasizing individual choice and placing individual blame) neglects broader structural issues related to consumers' self-determination, such as limited access to affordable nutritious food and the pervasive presence and promotion of unhealthy options \cite{ver2010access}. These dynamics raise critical concerns about the authenticity of consumer agency in the digital age: Are individuals today genuinely making independent decisions about their saving and spending, or is agency in consumption becoming increasingly illusory? Consider the following examples,

\vspace{0.1cm}

\textbf{Obligatory Consumption:} Obligatory consumption encompasses essential expenditures driven by social, cultural, legal, or economic pressures, such as taxes, insurance premiums, tuition fees, loan repayments, and child support. In the digital age, this phenomenon has intensified due to subscription-based models and dependencies on digital ecosystems. A few examples among many include recurring payments that are often overlooked but accumulate to substantial amounts, such as subscriptions for software like Microsoft Office, streaming services like Netflix \cite{EMB2024}, gaming memberships \cite{bercu2016}, and IoT devices relying on subscription-based models \cite{Kerschbaumer2022,Tatsuya22}. Additionally, hidden fees from delivery apps, in-app purchases, and transaction charges further raise consumer costs \cite{Robert24Platform}. Remote work expenses \cite{kenney2016rise} and educational e-resources, such as digital textbooks and e-learning subscriptions, also place additional strain on budgets \cite{Subramaniam2024,Sun2022Digital}.

\vspace{0.1cm}

\textbf{Algorithmic Persuasion \& Impulsive Consumption:} Algorithms embedded in digital platforms employ psychological tactics to influence consumer behavior, often leading to impulsive purchases and financial strain. Consumers have explicitly stated that they attribute excessive consumption to such marketing strategies \cite{pereira2012blame}. For example, strategies such as Buy Now, Pay Later systems encourage overspending by deferring payments \cite{Rosa737080, Thaler1999Mental}, time-limited promotions and personalized recommendations create a sense of urgency and artificial need \cite{Chen2025,Li2022,OberoiG3,Cybertek2024}, dynamic pricing and freemium models restrict free features and encourage in-app purchases \cite{Runge2022,Mishra2017}, social media algorithms amplify the social pressure to conform \cite{Rosa737080, GweeG1, Taylor2022The}, and techniques like fear of missing out (FOMO)-driven scarcity alerts further prompt impulsive buying \cite{Cybertek2024, Zhenga2023}.

\vspace{0.1cm}

\textbf{Work Schedule Instability:} Unpredictable work schedules significantly harm workers' financial security and well-being, particularly affecting vulnerable groups like part-time employees, those with lower incomes or education, and food service and retail sectors with more erratic schedules \cite{schneider2019s, mccrate2018unstable}. The instability of work hours leads to burnout, conflicts between work and personal life, and financial disruptions \cite{hannagan2015income, morduch2017and, farrell2016paychecks, reserve2016report, schneider2017income}. This inconsistency in earnings complicates financial planning, undermining workers' ability to make informed decisions about saving and spending, thus reducing their consumption autonomy. 

At the same time, the increasing use of digital tools and algorithms in workforce management, particularly for labor scheduling, can further undermine workers' consumption agency. While these technologies improve efficiency, they can also lead to greater instability in workers' schedules \cite{kantor2014working,cons_res,Kathleen2019,Zhang2022}. For example, reports from The New York Times highlight that these tools have made work hours more unpredictable, as they can dynamically and in real-time adjust schedules with little to no advance notice \cite{kantor2014working,loggins2020here}.

\vspace{0.1cm}

\textbf{Our Work.} In response to the earlier question of whether individuals today are truly making independent choices or if agency in consumption is becoming more illusionary, we argue that in the age of algorithms, many consumers are subject to \emph{value capture}, where externally imposed values (such as those embedded in manipulative marketing) are adopted without critical reflection or personal adaptation \cite{nguyen2024value}. Furthermore, systemic barriers, including socio-economic inequality, algorithmic biases, and the misuse of algorithms in areas like automated work scheduling, undermine consumer agency by restricting individuals' ability to make decisions aligned with their unique values and needs.

In this context, consumption agency must be recognized as a \emph{value} to be actively cultivated, rather than assumed. Thus, to strengthen agency, it is critical to showcase the impacts of these challenges on consumption autonomy more than ever before. For consumers, understanding these dynamics helps them make more informed decisions about their immediate and future financial needs. They can move toward \emph{true consumption agency}, which enables individuals to make independent decisions about what, when, and how to consume, based on their own values and preferences, without being influenced by external pressures, manipulative strategies, or systemic limitations.

For policymakers, these insights provide valuable guidance for anticipating economic behaviors and developing interventions, such as regulatory frameworks and social welfare programs \cite{frederick2002time}.

Acquiring these insights therefore requires examining consumption behavior (i.e., how individuals make decisions about spending and saving) within the constraints of limited agency. To carry out this study, in this paper, we utilize discounted utility models, a widely recognized class of frameworks in economics for analyzing intertemporal consumption behavior \cite{deaton1992understanding, deaton1989saving}. These consumption models help deepen our understanding of how individuals make decisions about consumption and saving over time.  Our objectives in this study are twofold: \begin{enumerate}
     \item We formally analyze the impact of algorithmic strategies on consumption behavior and the resulting effects on consumer financial well-being and stability. That is, we formally assess consumption with bounded agency where the absence of true agency in consumption decisions leads to financial distress.
     \item We identify solutions that can help mitigate these impacts to foster greater consumption agency.
 \end{enumerate}

We specifically analyze several scenarios to determine whether individuals compelled to adopt predetermined consumption values or influenced by external pressures are at a higher risk of experiencing early financial bankruptcy. Additionally, we examine the adverse effects on financial utility for individuals who, due to uncertainty about their work schedules, are unable to exercise consumption agency or plan their finances effectively. As part of the solutions, we demonstrate how reassessing consumption values (through value deliberation and prioritization of genuine needs) can improve financial well-being and empower consumers.

\vspace{0.2cm}

\textbf{Our contributions.} In summary, the key contributions of this paper are as follows:

\begin{itemize}
    \item We formally demonstrate the consequences of a lack of agency in consumption, which are exacerbated by algorithms. We illustrate these effects across three distinct use cases: obligatory consumption, algorithm-driven impulse purchases, and dynamic work scheduling.

    \item We discuss the significance of true agency in consumption and formally prove that empowering consumers to move toward true agency can help them avoid financial distress.

    \item We highlight the significance of viewing consumption as a value rather than a given in the modern economy. Additionally, we demonstrate the benefits of encouraging deliberation on consumption, which can be facilitated through educational initiatives and policy-driven regulatory measures.

\end{itemize}

We first present the related work in \S\ref{sec:related-work}. We then introduce the core modeling framework in \S\ref{sec:main-model}. Subsequently, we apply this framework to explore issues related to limited agency in several scenarios, including obligatory consumption, algorithmic persuasion and impulsive consumption, and work schedule instability, in \S\ref{sec:obligatory}, \S\ref{sec:impulse}, and \S\ref{sec:work}, respectively. Afterward, we propose potential solutions in \S\ref{sec:solutions} and include a discussion of limitations and future work in \S\ref{sec:limitations}. Finally, we present our conclusions in \S\ref{sec:conclusions}.

\section{Literature Review}
\label{sec:related-work}
\textbf{Consumption Models.}
Consumption models, part of the broader discounted utility (DU) framework, explore how individuals make intertemporal decisions about consumption and saving \cite{deaton1992understanding, deaton1989saving}. These models assume individuals maximize discounted utility, often prioritizing immediate rewards over future ones \cite{chabris2010intertemporal}, and focus on choices regarding when to consume or save \cite{samuelson1937note}. Key models include the permanent income hypothesis (PIH), the life-cycle model \cite{friedman1957permanent, deaton1992understanding, friedman2018theory}, and the neoclassical model \cite{butler2001neoclassical}. The PIH predicts consumption based on expected average income over time, while the life-cycle model adds a finite time frame for asset accumulation and use. The neoclassical model builds on these principles using neoclassical economic theory. Additionally, the income fluctuation problem (IFP), another model within this family of models, introduces income uncertainty and limits consumption to current assets, prohibiting consumption beyond owned resources \cite{ma2020income, sargent2014quantitative, deaton1989saving, den2010comparison, kuhn2013recursive, rabault2002borrowing, reiter2009solving, schechtman1977some}.

\vspace{0.1cm}

\textbf{AI, Consumption, and Agency.} The concept of agency in consumption and in the presence of algorithmic decisions has been explored in various contexts, e.g., studies on consumer agency and its adaptability during environmentally imposed constraints such as the pandemic \cite{gordon2024consumer}, recommendation systems that could either enhance or undermine user agency by tailoring experiences to user preferences but also restricting agency through over-reliance on opaque algorithms \cite{wu2024negotiating, Hosanagar2013}, and studies on fairness in AI emphasizing the importance of agency (as a general concept) as a dimension of equity, noting that constrained or manipulated choices disproportionately affect vulnerable populations \cite{rubel2020algorithms, Alvarez-Garcia2020, nguyen2024value}. Furthermore, prior work suggests that interpretability and transparency can support user agency by fostering trust and enabling informed decision-making \cite{cheong2024transparency, LogRocket2018, Brey2020}.

\vspace{0.1cm}

\textbf{Obligatory Consumption.} Researchers have explored various dimensions of this phenomenon, highlighting how subscription-based models dominate consumer spending. For instance, Amazon Prime and Netflix exemplify this trend, requiring continuous payments for access and locking users into ongoing costs \cite{EMB2024}. Similarly, gaming platforms leverage battle passes and memberships to restrict content access, effectively monetizing user engagement over time \cite{bercu2016}. Additionally, digital ecosystems where consumers are often tied to proprietary infrastructures (such as Apple’s App Store or Google Play) restrict users' ability to switch providers without incurring additional expenses \cite{Shao2020, Geradin2021}. IoT devices, including smart thermostats and home security systems, further exacerbate these dependencies by requiring subscriptions to unlock full functionality \cite{Kerschbaumer2022,Tatsuya22}. Hidden costs associated with digital platforms, e.g., delivery apps that impose service fees and mandatory tips, digital payment systems like PayPal and Venmo with transaction charges are other examples of obligatory consumption \cite{Robert24Platform}. 

Work-related obligations further embed consumption into daily life. The rise of remote work has necessitated investments in digital tools, high-speed internet, and office equipment, while gig economy workers face additional costs for maintaining vehicles or upgrading smartphones to meet platform requirements \cite{kenney2016rise}. Education systems increasingly mandate digital resources, such as e-textbooks and online learning platforms, creating additional financial burdens for students \cite{Subramaniam2024,Sun2022Digital}. Healthcare and wellness sectors have also adopted subscription-based models. Telemedicine services and wearable health devices, such as fitness trackers, often tie essential features to recurring payments, further embedding obligatory consumption into consumers’ lives \cite{Haleem2021,Anawade2024}.

\vspace{0.1cm}

\textbf{Algorithmic Persuasion.} Extensive research has been conducted on persuasive consumption and impulse purchases, particularly within consumer behavior, marketing strategies, and algorithmic design. Algorithmic personalized recommendations, which use tailored products and offers, are commonly employed to boost consumption \cite{Chen2025, Li2022}. These recommendations often make use of user data to enhance relevance and urgency, increasing the chances of unplanned purchases \cite{Chen2025, Li2022}. Time-sensitive offers and scarcity tactics are highly effective in triggering impulsive behavior \cite{OberoiG3, Cybertek2024}. Approaches like flash sales, countdown timers, and low-stock alerts create a sense of urgency and fear of missing out (FOMO), prompting users to make quick purchase decisions without much deliberation \cite{OberoiG3, Cybertek2024}. Dark patterns, which are manipulative design techniques, further contribute to impulse buying. These include deceptive interface designs, preselected add-ons, and hidden costs that push users toward unintended purchases \cite{FTC2022, FTC2023}. Additionally, gamification elements like reward points, badges, and streaks are used to enhance engagement and increase spending, particularly in e-commerce and gaming platforms \cite{Octalysis2024, Scavify2024}.

Social media platforms amplify these effects through influencer endorsements, viral trends, and targeted advertisements, leading to impulsive and unnecessary consumption \cite{Rosa737080, GweeG1}. The integration of ``buy now" features directly within social media apps makes the transition from browsing to purchasing seamless, increasing impulsive buying behavior \cite{APA2019, Singh2024}. The rise of mobile commerce (m-commerce) has also been identified as a significant driver of impulsive buying \cite{WuYe2013, Zhenga2023}. The ease and convenience of smartphones facilitate spontaneous purchases, while push notifications and personalized ads on mobile devices serve as constant prompts for consumption \cite{Zhenga2023, WangWu2023}. Lastly, subscription-based models and microtransactions in digital services and games promote repeated impulse spending through low-cost, recurring payments that accumulate over time \cite{Ferriera2023, Chaudhary2023}.

\vspace{0.2cm}

\textbf{Remark.} Although both obligatory consumption and algorithmic persuasion can involve recurring payments, they differ in nature. Obligatory consumption has many forms, which also include structural, essential, or externally imposed expenses like child support, remote work equipment, telemedicine, online/digital education, and taxes, which inherently differ from subscriptions. These are driven by legal, social, or infrastructural pressures. Detailed examples were listed earlier for each category.

Algorithmic persuasion refers to tactics that exploit consumers to prompt voluntary, often impulsive, spending. While some recurring payments (like in-app purchases) may appear in both contexts, they are not the same. An opt-in subscription under algorithmic persuasion highlights how this expenditure may be strategically engineered for choice-influencing rather than fulfilling essential needs. In contrast, an obligatory subscription is one that is required for basic access to platform lock-ins and needed functionalities (like the ones mentioned earlier that are tied to proprietary infrastructures of App stores or IoT devices /security systems' mandatory subscription fees in digital ecosystems). So, although the subscription categories may appear similar in name, the nature of the expenses differs significantly. Rather than being an impulsive choice, these examples are strict expenses and limit financial flexibility due to economic, legal, or infrastructural constraints.

\vspace{0.1cm}

\textbf{Ruin Analysis and Minimum Subsistence.} Financial bankruptcy, or ruin, has been extensively studied in various theoretical and applied frameworks. Research has focused on evaluating an insurer's insolvency risks \cite{schmidli2002minimizing, asmussen2010ruin}, reducing bankruptcy risks \cite{abebe2020subsidy, papachristou2022allocating}, and investment modeling \cite{Karatzas1986}. These works collectively contribute to understanding and addressing the challenges associated with financial ruin \cite{Bayratkar2012, grandits2015optimal}. Separately, the concept of lower bounds on consumption (minimum subsistence levels) has also been extensively studied in relation to utility-maximizing consumption strategies \cite{zhang2020fairness, ZIMMERMAN2003233, ALVAREZPELAEZ2005633, Shin2011, shim2014portfolio, ANTONY2019124, dwork2018fairness, miranda2023saving, miranda2020model} which indicates that one needs to pay for basic needs such as food and shelter.

\vspace{0.1cm}

\textbf{Work Schedule Instability.} Current research focuses heavily on sociology, particularly on irregular work scheduling and its wide-ranging consequences. Unstable schedules contribute to income volatility \cite{hannagan2015income, morduch2017and, farrell2016paychecks, reserve2016report, schneider2017income}, which in turn leads to financial and life hardships \cite{bania2006income, reserve2016report, leete2010effect, mccarthy2018poverty, Lambert218}. These include burnout from precarious work schedules \cite{schneider2019consequences, Hawkinson2023-jg} and work-family conflicts \cite{golden2015irregular, julia2014}, particularly affecting parents dealing with unpredictable or just-in-time schedules. Furthermore, parental work instability has been linked to increased anxiety and child behavioral problems \cite{schneider2019s}. In the field of Human-Computer Interaction (HCI), researchers have taken a participatory approach to studying similar issues. This includes emphasizing the importance of worker participation in scheduling decisions to ensure fairness \cite{Uhde2020} and exploring elicitation methods to model worker preferences for more effective schedule management \cite{Lee2021}. Additionally, scheduling software and planning algorithms have been identified as contributors to more unpredictable schedules, particularly for low-wage workers in the service sector \cite{kantor2014working, cons_res, Kathleen2019, Zhang2022}. For instance, a New York Times report highlighted cases where algorithmically scheduled employees received their timetables less than three days before the start of the workweek \cite{kantor2014working}. Sudden schedule changes and sales-driven pay reductions have also been correlated with these practices \cite{loggins2020here}.

\section{Fallacy of Agency:  Case Studies}

In this section, we explore various scenarios of consumption behavior that reveal the illusion of agency. We will formally demonstrate that even rational agents can experience adverse outcomes, despite appearing to have clear autonomy. Our analysis focuses on three key cases: 

\begin{enumerate}
    \item  Mandatory consumption of a fixed amount at each time step,
    \item Algorithmic persuasion and impulsive consumption, involving spending money on basic living needs such as food and shelter (minimum subsistence) alongside externally-influenced impulsive consumption values, and
    \item The adverse effects of unpredictable work schedules on worker consumption agency.

\end{enumerate} 

\vspace{0.1cm}

\textbf{Why these three scenarios?} Each condition contributes to a subtle but critical form of epistemic harm in which individuals’ ability to act as credible, autonomous agents is diminished. We selected these three conditions (i.e., obligatory consumption, algorithmic persuasion, and work schedule instability) because they reflect distinct, yet deeply interconnected, ways in which consumer agency is constrained in the digital economy:

\begin{itemize}
    \item Obligatory consumption captures structural and economic constraints that reduce consumers’ freedom to opt out of ongoing financial commitments.
    \item Algorithmic persuasion highlights how digital interfaces and design tactics influence behavior through manipulation.
    \item   Work scheduling instability reflects the temporal uncertainty that undermines individuals’ ability to plan and make informed decisions. 

\end{itemize}

Together, these cases span structural, technological, and temporal dimensions of agency, offering a multidimensional view of how autonomy is eroded. This framework also aligns with recent calls in both economics and ethics in AI toward relational and systemic determinants.

Before delving into these cases, we will first establish the underlying model and system properties.

\subsection{Basic Model and System Properties}\label{sec:main-model}

We begin by outlining the foundational model dynamics underlying our arguments. Suppose a consumer starts with an initial asset value of $a_0$. The consumer aims to consume these available assets while simultaneously working to grow them through saving, investing, and other means. Let $a_t$ be the assets' value at the time-point $t$ and $c_t$ be the consumption. Let $R_t$ be the return on assets and $y_t$ the income where $R_t$ and/or $y_t$ are coming from a known distribution. 

With these parameters established, we can now outline our model of consumption under uncertainty, which serves as the basis for our arguments. Although numerous models aim to capture consumption under uncertainty, they all fundamentally depend on the discounted utility framework, which is widely used in economics to analyze the relationship between consumption and savings (i.e., intertemporal consumption models with discounted utility) \cite{ma2020income, sargent2014quantitative, nokhiz2021precarity, nokhiz2024agent,deaton1992understanding}. In a discounted utility model, an agent consumes an amount $c_t$ at each time step $t$, deriving utility $u(c_t)$ from a concave function $u$. The goal is to determine a policy that maximizes the total discounted utility over time. Formally, we can define the basic model as,
\begin{align}\label{eq:model}
\max \E_{R_t,y_t}\left(\sum_{t=0}^\infty \beta^t u\left(c_t\right)\right) \\
\text{s. t.}\nonumber\\
a_{t+1} = R_t(a_t-c_t)+y_t\label{eq:system-dyn}\\
0 < c_t \le a_t
\end{align}

Here, $\beta \in (0,1)$ represents the discount factor, reflecting consumers' preference for immediate rewards over long-term rewards of equivalent magnitude. The utility function can be any standard utility function in economics, such as the Constant Relative Risk Aversion (CRRA) utility \cite{ljungqvist2018recursive, wakker2008explaining}, which is used to capture individuals' tendency to prefer lower-risk, smaller gains over higher-risk outcomes with potentially larger payoffs \cite{o2018modeling}. The last constraint ensures the consumer can only consume from available assets. This prohibits behaviors such as maintaining positive assets at every time step by borrowing indefinitely.

\vspace{0.1cm}

Using the discounted utility model, we can analyze consumer behavior to understand how individuals act in various circumstances and evaluate their financial stability across different scenarios. To achieve this, we require a concept that encapsulates a consumer's financial instability. In this study, we use the concept of ``ruin" to capture this notion. Ruin refers to a situation where a consumer exhausts their assets within a short period. Formally:

\vspace{0.1cm}

\textbf{Ruin.} We define ruin as the state where $\exists T < \infty$ such that $a_{t+1}\le 0$ for $t=T$. This would mean that the consumer would no longer be able to sustain their consumption, leading to financial ruin.

\vspace{0.2cm}

In \S\ref{sec:obligatory} and \S\ref{sec:impulse}, we examine two scenarios: obligatory consumption and impulsive consumption where various behavioral constraints impact consumers' financial stability. We demonstrate how the probability of financial ruin shifts under different conditions and how these behaviors can lead to early ruin. In \S\ref{sec:work}, we extend the analysis to temporal factors, illustrating how greater certainty in temporal events can enhance achieved utility.

\subsection{Obligatory Consumption}
\label{sec:obligatory}
In this section, we examine the first use case: obligatory consumption. 

\vspace{0.2cm}

\textbf{How does this condition affect consumer agency?} Obligatory consumption constrains agency by turning certain financial decisions into non-choices. When individuals are locked into certain payments for mandatory needs, utilities, or platform-related fees (often without a clear understanding of long-term costs) they lose the ability to re-prioritize their expenditures in response to changing needs or circumstances. These expenses become effectively fixed, and the inflexibility limits the individual’s capacity to act in accordance with their own values or preferences. Even when expenses were initially chosen, their persistent and sometimes opaque nature (e.g., hidden mandatory fees) undermines ongoing deliberation and adaptive decision-making, weakening the authenticity of consumer autonomy (which is an epistemic harm).

To model this scenario, the consumer is restricted to consuming a predetermined fixed amount at each time step. While the consumer has the flexibility to select this fixed amount initially, it cannot be adjusted later, effectively committing to a rigid consumption plan. We will formally investigate the effect this would have on the financial stability of the consumer.

To make things simple, let us assume $R_t=1$ for all $t$ and $y_t$ is coming from a distribution $D_Y$. We can show the following,

\vspace{0.1cm}

\begin{theorem}\label{thm:oblig-ruin}
Consider an income $y_t$ coming from a distribution $D_Y$ with a known mean $Y$. Assume the consumer is only allowed to consume a fixed constant obligatory amount $c \ge 0$ at every step. A rational utility-maximizing consumer would go to ruin when the utility function is concave.
\end{theorem}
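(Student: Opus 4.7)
The plan is to reduce the statement to a random walk analysis of the asset sequence and then use the strict monotonicity implied by standard concave utilities to pin down the rational choice of $c$.

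First, I would observe that with $R_t = 1$ and $c_t = c$ fixed, the dynamics in the system equation collapse to $a_t = a_0 + \sum_{s=0}^{t-1}(y_s - c)$, so $a_t - a_0$ is an i.i.d.\ random walk with mean increment $Y - c$. I would then split the analysis by the sign of this drift. When $c > Y$, the Strong Law of Large Numbers yields $a_t/t \to Y - c < 0$ almost surely, forcing $a_t \le 0$ for all sufficiently large $t$, which is ruin. When $c = Y$, the walk is centered, and classical recurrence of centered one-dimensional random walks with finite variance guarantees that $a_t$ eventually crosses any level below $a_0$ almost surely, producing ruin again. So for every $c \ge Y$, ruin occurs with probability one.

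Second, I would argue that a rational utility-maximizing consumer must select some $c^{*} \ge Y$. Because the consumer is locked into a single $c$, the expected discounted utility reduces to $f(c) = u(c)\,\E\bigl[(1 - \beta^{T(c)})/(1-\beta)\bigr]$, where $T(c)$ is the ruin time. For $c < Y$ this is bounded above by $u(c)/(1-\beta)$, which, by the strict monotonicity of standard concave economic utilities such as CRRA or $\log$, is strictly less than $u(Y)/(1-\beta)$. Meanwhile, at $c = Y$ (or just above), first-passage-time estimates for the near-centered walk show that $\E[\beta^{T(c)}]$ is small when $a_0$ is reasonably large or $\beta$ close to $1$, so $f(Y)$ comes arbitrarily close to $u(Y)/(1-\beta)$. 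Concavity together with strict monotonicity of $u$ then forces the optimum into the region $c^{*} \ge Y$, where the first step already gave ruin almost surely.

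The main obstacle is this second step --- rigorously establishing $f(c) < f(c')$ for $c < Y \le c'$. This demands control of $\E[\beta^{T(c)}]$, i.e., of the Laplace transform of the first-passage time of a (near-)centered walk. The tool I would invoke is the exponential martingale $M_t = \exp\bigl(\theta(a_t - a_0) - t\,\kappa(\theta)\bigr)$, with $\kappa(\theta) = \log \E[e^{\theta(y - c)}]$, and apply optional stopping at $T(c)$ to obtain an upper bound on $\E[\beta^{T(c)}]$. Once this is in hand, the monotone comparison across the two regimes pins down $c^{*} \ge Y$ and, combined with the first step, delivers ruin with probability one.
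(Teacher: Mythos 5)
Your first step (the drift analysis of the asset random walk) is fine as far as it goes, but the second step --- showing that a rational consumer must choose $c^{*} \ge Y$ --- is where the argument genuinely breaks, and you have not closed it. The comparison you propose cannot work in the form stated: for $c$ slightly below $Y$ the walk has positive drift, so for large $a_0$ the ruin probability is small and $f(c)$ is itself arbitrarily close to $u(c)/(1-\beta)$, hence arbitrarily close to $u(Y)/(1-\beta)$; meanwhile at $c = Y$ ruin is almost sure, so $f(Y)$ sits strictly below $u(Y)/(1-\beta)$. Both families of values approach the same limit from below, so the observation that ``$f(c) < u(Y)/(1-\beta)$ for $c<Y$ while $f(Y)$ comes arbitrarily close to $u(Y)/(1-\beta)$'' determines nothing about which is larger; the marginal comparison near $c=Y$ could go either way depending on $D_Y$, $a_0$, and $\beta$. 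The exponential-martingale/optional-stopping bound on $\E[\beta^{T(c)}]$ you invoke gives control of how fast ruin arrives for $c \ge Y$, i.e., a lower bound on $f$ there, but what you need is to dominate the \emph{entire} regime $c<Y$, and no such bound is supplied. Two smaller issues: recurrence of the centered walk at $c=Y$ requires finite variance, which the theorem does not assume (only a known mean); and the conclusion is in fact conditional on $a_0$ being sufficiently large, a hypothesis you lean on informally but never isolate.

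The repair is much simpler than the machinery you reach for, and it is what the paper does: any sustainable (non-ruin) fixed-consumption plan forces the constant to converge to the mean income, so by Jensen's inequality applied to the concave $u$ its expected discounted utility is at most $\frac{1}{1-\beta}u(Y)$; on the other hand, consuming the entire asset stock at the first step yields utility at least $u(a_0)$. For $a_0 > u^{-1}\left(\frac{1}{1-\beta}u(Y)\right)$ the one-shot plan strictly dominates every infinite-horizon plan, so the rational utility-maximizer prefers a finite-horizon plan that ends in ruin. This sidesteps the delicate comparison near $c=Y$ entirely: you do not need to locate the optimal $c^{*}$, only to exhibit one ruinous plan that beats every non-ruinous one. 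If you want to salvage your route, replace the comparison ``$c$ just above $Y$ versus $c$ just below $Y$'' with the comparison ``$c$ very large versus all $c \le Y$,'' at which point the martingale estimates become unnecessary.
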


\vspace{0.1cm}

 \textbf{Concept Sketch.} Before presenting the formal proof of Theorem~{\ref{thm:oblig-ruin}}, we provide a brief overview of the main argument. In the limit, fixed consumption converges to the income level. Given the concavity of the utility function, Jensen's inequality implies that the resulting utility over an infinite horizon is bounded. This observation allows us to show that, for sufficiently large initial assets, there exists a finite-horizon consumption path that yields a strictly higher utility, ultimately leading to termination in a finite number of steps.

\vspace{0.1cm}

\begin{proof}
Consider the discounted utility model introduced in Equations~\ref{eq:model} and~\ref{eq:system-dyn}, under the setting where $R_t=1$ for all $t$. Assume the consumer goes to ruin at time $T$. Let $c_T$ be the constant consumption for ruin at time $T$. Note that,
\begin{align*}
c_T = \sum_{t=0}^{T-1} \frac{y_t}{T}+\frac{a_0}{T}
\end{align*} 
Let $c= c_\infty$. Note that $E(c) \rightarrow Y$.

Consider the scenario where $T \rightarrow \infty$. We can see that,
\begin{align*}
E \left(\sum_{t=0}^\infty \beta^t u(c_t)\right) &= \sum_{t=0}^\infty \beta^t E\left(u(c)\right)\\ &\le \sum_{t=0}^\infty \beta^t u\left(E\left(c\right)\right)
\end{align*}
where the inequality comes from Jensen's Inequality. Therefore, we get that $E \left(\sum_{t=0}^\infty \beta^t u(c_t)\right) \le \frac{1}{1-\beta}u(Y)$. Note that  $c_t=c$ for all $t$ due to an obligatory amount $c$. 

Now consider the scenario where $a_0 > u^{-1} \left(\frac{1}{1-\beta}u(Y)\right)$. However, in this case, consuming $a_0$ at the first time step would leave the consumer with more utility than not going to ruin. A rational consumer with the goal of maximizing utility can, therefore, consume all assets in a short finite time and go to ruin but still end up with better overall utility compared to a scenario where they would save assets and consume within an infinite horizon.

\end{proof}

This shows that no matter how rational and optimal the consumer's behavior is, the lack of agency and the obligatory nature of the consumption could lead to adverse outcomes. Note that we have only relied on the fact that the utility is concave. This is a characteristic common to the class of consumption functions referred to as \emph{risk-averse}. This indicates that even risk-averse behavior is insufficient to prevent a consumer from facing ruin when their agency is constrained. 

\vspace{0.1cm}

\textbf{Remark on the Concave Utility.} The use of concave utility functions in settings with fixed or non-discretionary consumption is a preexisting choice. For example, \cite{angoshtari2022optimalconsumptionhabitformationconstraint} study a habit-formation constraint using CRRA utility, where consumption cannot fall below a fraction of past levels. The work of \cite{Xu_2016} consider a fixed upper bound on consumption under CRRA preferences, while the work of \cite{Roh2020} analyze both fixed upper and fixed lower consumption bounds using the same class of utility functions. These examples illustrate that concave utility functions have been used in models where consumption is partially or fully constrained, similar in spirit to our notion of obligatory consumption.

\vspace{0.1cm}

However, in practice, a consumer cannot consume arbitrarily small amounts near $0$. To meet their basic needs and sustain themselves (i.e., minimum subsistence), the consumers must consume at least a specific minimum amount at each time step, which serves as a lower bound on consumption. Our results trivially extend to the case where minimum subsistence is considered. We will state this as a corollary for completeness.

\vspace{0.1cm}

\begin{corollary}
Consider an income coming from a distribution $D_Y$ with a known mean $Y$. Assume the consumer is only allowed to consume a fixed constant amount $\ge b$ at every step (where $b$ is the minimum subsistence). A rational utility-maximizing consumer would go to ruin when the utility function is concave.
\end{corollary}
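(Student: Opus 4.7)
The plan is to reduce the corollary directly to Theorem~\ref{thm:oblig-ruin} by observing that imposing the subsistence floor $c \ge b$ only shrinks the set of feasible constant-consumption strategies, while the key upper bound used in the proof of Theorem~\ref{thm:oblig-ruin} is unaffected. First I would recall the crucial inequality from that proof: for any sustainable constant consumption, Jensen's inequality applied to the concave $u$ together with $\E(c)\to Y$ yields
\begin{align*}
\E\left(\sum_{t=0}^\infty \beta^t u(c_t)\right) \le \frac{1}{1-\beta}\, u(Y).
\end{align*}
This bound depends only on the concavity of $u$ and on the income distribution, not on the lower end of the admissible consumption range, so it carries over verbatim to the subsistence-constrained setting.

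Next I would split the analysis into two regimes according to the relation between $b$ and $Y$. If $b > Y$, then every admissible choice $c \ge b$ strictly exceeds expected income, so ruin is unavoidable in finite time for every feasible strategy and the conclusion is immediate. If $b \le Y$, the threshold argument from Theorem~\ref{thm:oblig-ruin} transfers directly: for $a_0 > u^{-1}\!\left(\frac{1}{1-\beta} u(Y)\right)$, a rational agent strictly prefers a finite-horizon consumption path that exhausts the assets, since such a path yields utility at least $u(a_0)$ on its first step and hence strictly exceeds the Jensen-based bound on the utility of any sustainable strategy.

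The main (and essentially only) step requiring attention is confirming that the ``consume everything in finitely many steps'' deviation used in Theorem~\ref{thm:oblig-ruin} remains feasible under the tightened constraint $c \ge b$. This is immediate: once $a_0$ sits above the threshold, exhausting the balance in a short horizon already entails per-step consumption well above any fixed subsistence level, so feasibility is preserved automatically. Consequently, no new machinery is needed and the corollary follows from Theorem~\ref{thm:oblig-ruin} with essentially no additional work beyond the bifurcation into the two regimes above.
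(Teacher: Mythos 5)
Your reduction is correct and matches the paper's intent exactly: the paper offers no separate proof, stating only that Theorem~\ref{thm:oblig-ruin} ``trivially extends'' to the subsistence-constrained case, and your argument is the natural fleshing-out of that claim (the Jensen bound is unaffected by the floor, and the exhaust-the-assets deviation remains feasible since it involves large per-step consumption). The explicit bifurcation into $b > Y$ versus $b \le Y$ is a small but welcome addition of care beyond what the paper records.
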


\begin{figure*}
    \centering
\includegraphics[width=0.8\columnwidth]{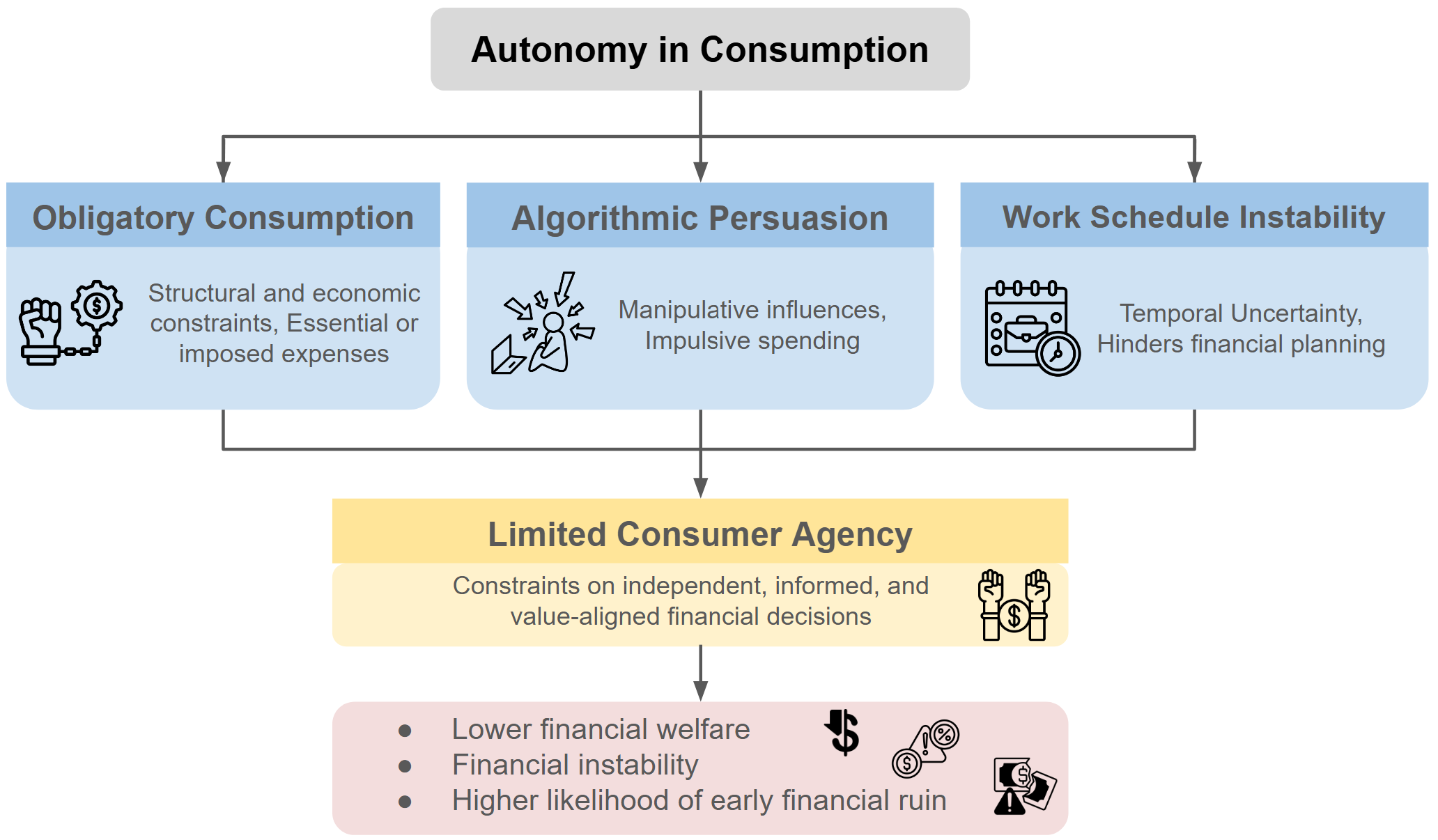}
    \caption{A summary of the explored real-world scenarios with bounded consumer agency, with each scenario illustrating a distinct aspect of limited consumer agency within the digital economy, often leading to unfavorable financial outcomes.}
    \label{fig:conceptual}
\end{figure*}
\subsection{Algorithmic Persuasion and Impulsive Consumption}
\label{sec:impulse}
In this section, we examine the case of algorithmic persuasion and impulsive consumption. 

\vspace{0.2cm}

\textbf{How does this condition affect consumer agency?} Through personalized advertising, urgency cues, and psychologically manipulative interface design, digital platforms create an environment where impulsive spending feels like self-directed behavior but is heavily externally shaped. Consumers would be responding to engineered triggers designed to bypass critical deliberation. This epistemic harm produces what \cite{nguyen2024value} calls value capture, where external systems impose consumption values that individuals adopt without recognizing the deviation from their own long-term goals or ideals. Thus, consumer agency is substantially hollowed out.

In this setting, the consumer is allowed to consume any (feasible) amount. The consumer consumes $c_t$ at each time point $t$ which consists of the minimum subsistence $b_t$ and the impulsive consumption $c_t-b_t$. The consumer's goal is to maximize their utility. We can see that under certain constraints, even when the consumer is assumed to have freedom of choice, the optimal behavior for the consumer is to consume available assets in a short finite time to gain the maximum utility possible.

\vspace{0.1cm}

\begin{theorem}\label{thm:con-ruin-min}
Consider the scenario where the consumer has control over their consumption, subjected to the consumption $c_t \ge b_t$ at each time point $t$ (where $b_t$ comes from a distribution with mean $B$ and the income comes from a distribution with mean $Y$). Assume the return rate $R_t$ is fixed at 1, and for all $t$, $|b_t-B| \le \delta$ and $|y_t-Y| \le \epsilon$. Also, assume $(Y-B) < 0$. There exists a $T < \infty$ such that with high probability $a_T \le 0$.
\end{theorem}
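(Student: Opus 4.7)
The plan is to convert the stated drift condition $Y - B < 0$ into a high-probability bound on $a_T$ via a one-sided telescoping estimate combined with a concentration inequality for bounded random variables. First, I would observe that the consumer's optimizing behavior is largely irrelevant to this claim: since $c_t \ge b_t$ is hard-coded into the feasibility constraint and $R_t = 1$, the asset update obeys $a_{t+1} = a_t - c_t + y_t \le a_t + (y_t - b_t)$ for every admissible policy. Unrolling from $t=0$ through $t=T-1$ yields the deterministic inequality $a_T \le a_0 + \sum_{t=0}^{T-1}(y_t - b_t)$, which will serve as the workhorse bound.

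Second, I would introduce $X_t := y_t - b_t$, note that $\E[X_t] = Y - B =: \mu < 0$, and observe that the assumptions $|y_t - Y| \le \epsilon$ and $|b_t - B| \le \delta$ give $|X_t - \mu| \le \epsilon + \delta$ almost surely. Treating the $X_t$ as independent across $t$ (as is standard for the distributional setup described in \S\ref{sec:main-model}), Hoeffding's inequality gives
\[
\Pr\!\left(\sum_{t=0}^{T-1} X_t \;\ge\; \tfrac{T\mu}{2}\right) \;\le\; \exp\!\left(-\frac{T\mu^{2}}{8(\epsilon+\delta)^{2}}\right).
\]
I would then choose $T$ so that $a_0 + T\mu/2 \le 0$, i.e., any $T \ge -2a_0/\mu$. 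On the complementary event, whose probability is at least $1 - \exp(-T\mu^2/(8(\epsilon+\delta)^2))$, the telescoped bound forces $a_T \le a_0 + T\mu/2 \le 0$, proving ruin with high probability; enlarging $T$ drives the failure probability below any prescribed threshold.

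The main obstacle I anticipate is simply the legitimacy of the concentration step. If one does not want to assume independence of $\{y_t\}$ and $\{b_t\}$ across time, the same conclusion follows from Azuma--Hoeffding applied to the martingale differences $X_t - \mu$, which remain bounded by $\epsilon+\delta$ under the stated hypotheses. Two minor bookkeeping points also need care: the bound is vacuous unless $\mu \ne 0$, which is guaranteed by the strict inequality $Y - B < 0$; and the initial asset $a_0$ must be finite so that the threshold $-2a_0/\mu$ is itself finite, which is implicit in the model. Everything else is routine algebra, and no appeal to the concavity of $u$ or to optimality of the policy is required, so the result holds uniformly over all admissible consumption strategies satisfying $c_t \ge b_t$.
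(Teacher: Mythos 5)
Your proposal is correct and follows essentially the same route as the paper's own proof: bound $a_T$ above by $a_0+\sum_{t}(y_t-b_t)$ using $c_t\ge b_t$, exploit the boundedness $|y_t-b_t-(Y-B)|\le\delta+\epsilon$, and apply Hoeffding with $T> 2a_0/(B-Y)$ to obtain the identical failure probability $\exp\bigl(-(B-Y)^2T/(8(\delta+\epsilon)^2)\bigr)$. Your added remarks on independence versus Azuma--Hoeffding and on the irrelevance of the optimality of the policy are sensible clarifications but do not change the argument.
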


\vspace{0.1cm}

\textbf{Concept Sketch.} Before presenting the formal proof of Theorem~{\ref{thm:con-ruin-min}}, we begin with a brief outline of the main idea. Since both income and minimum subsistence levels are bounded, we can derive bounds on the change in assets at any time step $t$. Because the final asset level is the cumulative sum of these changes over time, we can apply this boundedness along with standard concentration inequalities. Under the assumption that the means of income and minimum subsistence behave as expected, it follows that, with high probability, the agent will reach ruin within a finite number of steps $T$. The formal proof is given below.

\vspace{0.1cm}

\begin{proof}
Consider the model introduced in Equations~\ref{eq:system-dyn} and~\ref{eq:model}. Let $a_0$ be the initial assets and $a_T$ be the assets at time point $T$. Then we can see that,
\begin{align*}
a_T &= a_0 +\sum_{t=0}^{T-1} y_t-\sum_{t=0}^{T-1} c_t \\
 &\le a_0 +\sum_{t=0}^{T-1} y_t-\sum_{t=0}^{T-1} b_t
\end{align*}
where the last inequality comes from the fact that $c_t \ge b_t$ for all $t$. For ease of notation, let $\lambda_T = a_0 +\sum_{t=0}^{T-1} y_t-\sum_{t=0}^{T-1} b_t$. Note that $\lambda_T \ge a_T$. Also,
\begin{align*}
E(\lambda_T)
 \le a_0 +(Y-B)T
\end{align*}
We can also see that under our assumptions, 
\begin{align*}
Y-B-\delta-\epsilon \le y_t-b_t \le Y-B+\delta+\epsilon
\end{align*}

\vspace{0.1cm}

Using the Hoeffding's inequality on $\lambda_T$ and assuming $T$ is large enough, i.e., $T$ is such that $(B-Y)T \ge a_0$, we see that,
\begin{align*}
& Pr \left(a_T \ge 0\right) \\&\le Pr \left(\lambda_T \ge 0\right) \\
& = Pr \left(\lambda_T-\left(a_0+(Y-B)T\right) \ge -a_0+(B-Y)T\right) \\
& \le e^{-\frac{2\left(-a_0+(B-Y)T\right)^2}{\left(Y-b+\delta+\epsilon-\left(Y-b-\delta-\epsilon\right)\right)^2 \cdot T}}\\
& = e^{-\frac{2\left(-a_0+(B-Y)T\right)^2}{\left(2\delta+2\epsilon\right)^2 \cdot T}}
\end{align*}
Furthermore, for any $T$ such that $T > 2a_0/(B-Y)$, we get,
\begin{align*}
Pr \left(a_T \ge 0\right) &\le e^{-\frac{2\left((B-Y)T/2\right)^2}{(2\delta+2\epsilon)^2 \cdot T}} \\
& = e^{-\frac{\left((B-Y)\right)^2 T}{8(\delta+\epsilon)^2}}
\end{align*}

Let $c = \frac{\left((B-Y)\right)^2}{8(\delta+\epsilon)^2}$. We get that with probability $\ge 1-e^{-cT}$, the assets at time $T$, $a_T$, satisfy $a_T \le 0$ and the consumer would go to ruin at time point $T$.

\end{proof}

We observe that the introduction of impulse consumption and minimum subsistence constraints in both of our settings can eventually lead to ruin under specific conditions, as outlined in the previous claims. Theorem~{\ref{thm:con-ruin-min}} shows that in this case, agents are likely to experience ruin within a finite number of steps. Moreover, the probability of avoiding ruin decreases exponentially with time $t$.

It is important to note that the implications of Theorem~\ref{thm:con-ruin-min} are significant: \begin{itemize}
    \item This is because when income distributions have low average income and high variance, even if income is high at certain points in time, the likelihood of early ruin is high as well. In other words, despite having high income at certain moments, the high uncertainty in income means that if regular consumption exceeds the average income, there is a strong likelihood of quickly depleting resources.
    \item Even if the average income is lower than the consumer's expenditures, they may still recklessly deplete their savings to cover excessive consumption. As discussed earlier, algorithms can push individuals to consume, even when financially incapable, through targeted ads, creating urgency or scarcity (such as FOMO-driven strategies), freemiums, and time-limited offers. These tactics exploit psychological and emotional triggers to influence decision-making, a phenomenon known as optimism bias, where individuals tend to underestimate their expenses, often leading them to make poor financial decisions despite having the means to plan more effectively \cite{peetz2009budgetingbias}.

\end{itemize} 

\subsection{Work Scheduling}
\label{sec:work}

In the previous sections, we observed that an individual's ability to maintain long-term financial stability varies based on the level of agency they possess. However, in our results so far, adjustments to the agency were made by modifying monetary parameters. In this section, we explore the concept of agency as a temporal parameter. This idea has been studied in recent work~\cite{nokhiz2024modeling, nokhiz2025counting, nokhiz2025counting1}, which introduce the concept of ``lookahead", representing the degree of certainty one has about future outcomes. This is especially relevant as we would like to formally evaluate the hypothesis that having more awareness of upcoming work schedules enables workers to better manage their finances and exercise greater autonomy. That is:

\vspace{0.2cm}

\textbf{How does this condition affect consumer agency?} Unstable work schedules impair consumption agency by undermining the individual’s ability to plan, anticipate, and allocate financial resources effectively. Agency is not only about making choices, but about having the temporal and informational stability to evaluate options. When income is volatile and work hours unpredictable (as often happens under algorithmic labor scheduling), individuals lack the foresight necessary for financial planning. This temporal uncertainty prevents them from aligning spending and saving with their values, instead forcing short-term, reactive decisions. The formal analysis in this section shows that even rational consumers with unstable schedules face systematically lower utility, illustrating that agency is structurally constrained by temporal unpredictability.

A lookahead of $k$ steps (weeks) implies that the agent is aware of the exact income and financial shocks they will face over the next $k$ steps in their work schedule. Research in \cite{nokhiz2024modeling, nokhiz2025counting,nokhiz2025counting1} formally demonstrates the consequences of this concept, as explained below:

\vspace{0.1cm}

\begin{theorem}[\cite{nokhiz2024modeling,nokhiz2025counting,nokhiz2025counting1}]\label{thm:main-lookahead}
Consider two individuals, one with a lookahead of $k$ steps and one with no lookahead. There are instances with bounded income, where the individual with lookahead achieves significantly higher utility compared to the individual without lookahead. Furthermore, the gap between the utility of the individual with lookahead and the individual without lookahead linearly increases with $k$. Formally, if $u_c$ is the utility of the individual with lookahead $k$ and $u_z$ is the utility of the individual without lookahead, then 
\[u_c-u_z \ge \Omega(k)\]
\end{theorem}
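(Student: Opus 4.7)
I would prove this constructively, exhibiting an explicit bounded-income instance rather than pursuing a universal bound. The instance should have high short-term volatility but a predictable long-run average, so that knowing the next $k$ realizations enables near-perfect consumption smoothing, while the no-lookahead agent is forced to hedge against ruin. A natural candidate is a bounded i.i.d.\ process: $y_t = M$ with probability $p$ and $y_t = 0$ otherwise, so $y_t \in [0,M]$ with $\E[y_t] = pM$ a small constant. Assume the simple setting $R_t = 1$, consistent with \S\ref{sec:obligatory} and \S\ref{sec:impulse}.

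\textbf{Analysis of both agents.} For the lookahead-$k$ agent, within any window of $k$ consecutive steps in which they observe total realized income $I_k = \sum_{t} y_t$ upfront, they can plan consumption to be approximately $I_k/k$ at each step, yielding per-window utility close to $\sum_{t} \beta^t u(I_k/k)$. For the no-lookahead agent, each decision is measurable only with respect to past income realizations. The asset-nonnegativity constraint $c_t \le a_t$, together with the ruin risk established in Theorem~\ref{thm:con-ruin-min}, forces them to either consume strictly below the smoothed rate to maintain a buffer or accept realized consumption that fluctuates with income. By Jensen's inequality applied to the random realized consumption on a strictly concave $u$, smoothing dominates: $\E[u(c_t)] < u(\E[c_t])$ whenever the no-lookahead agent's consumption has strictly positive variance.

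\textbf{Assembling the $\Omega(k)$ gap.} The per-step utility gap between the smoothed and the unsmoothed consumer is a constant $\Delta > 0$ depending on $p$, $M$, and the curvature of $u$ but \emph{not} on $k$. Summing over a horizon of $k$ steps where the lookahead advantage is active yields a total gap of at least $k \Delta = \Omega(k)$. To make the size of the ``active window'' precise, I would appeal to a concentration argument in the spirit of the Hoeffding-based bound used in Theorem~\ref{thm:con-ruin-min}, ensuring that with high probability $I_k$ concentrates around $pMk$ and the smoothing strategy is feasible against the initial assets.

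\textbf{Main obstacle.} The hardest part will be rigorously lower-bounding the no-lookahead agent's loss. It is not a priori obvious that uncertainty strictly hurts, since in principle the no-lookahead agent could consume at the deterministic rate $pM$ each period and match the smoothed policy on average. Ruling this out requires combining strict concavity of $u$ with the asset-nonnegativity constraint: setting $c_t \equiv pM$ risks ruin on long runs of zero income, and any policy that avoids ruin must either consume strictly less on average or suffer strictly positive consumption variance, both of which incur a per-step utility loss bounded below by a constant. Formalizing this dichotomy cleanly, as is done in the cited works \cite{nokhiz2024modeling,nokhiz2025counting,nokhiz2025counting1}, is the technical crux.
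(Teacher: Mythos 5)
There is a genuine gap, and you have correctly located it yourself in your ``Main obstacle'' paragraph: the dichotomy you assert there (any ruin-avoiding no-lookahead policy must either under-consume on average or carry strictly positive consumption variance, \emph{each costing a constant per step}) is unsupported and, for the i.i.d.\ instance you propose, almost certainly false. Because your income process is i.i.d.\ with a \emph{known} mean $pM$, the no-lookahead agent already knows the correct target rate; its only handicap is feasibility. It can spend an initial sublinear phase building an asset buffer and thereafter consume a \emph{deterministic} amount close to $pM$, so that conditional on solvency there is no Jensen loss at all, and the total shortfall relative to the perfect smoother is $o(k)$ (driven by the buffer-building phase and the $O(\sqrt{k})$ fluctuations of the income random walk), not $\Omega(k)$. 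So the instance itself does not separate the two agents at the claimed rate; no amount of Hoeffding-style concentration will rescue it.

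The paper's construction (Theorem~\ref{thm:main-lookahead-formal}) evades exactly this problem by hiding the \emph{target rate} rather than the sample noise: income is $1$ for the first $k/2$ steps and then a single uniform draw $x$ for the next $k/2$ steps, so the optimal smoothed rate $(1+x)/2$ is unknowable during the first half, yet the no-lookahead agent must commit to $k/2$ consumption values that (by Yao's min-max principle) may be taken deterministic. With constant probability over $x$, their sum $S$ is off from $(k/2)\cdot(1+x)/2$ by $\Omega(k)$; the quantitative second-order concavity bound of Lemma~\ref{lem:lb-helper} combined with Cauchy--Schwarz then converts this aggregate mis-targeting into a utility deficit of $\Omega(k)$. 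Your proposal is missing all three of these ingredients: an instance in which the unknown is a distributional parameter realized once (so that learning the mean does not help), the reduction to deterministic algorithms via Yao's principle, and a quantitative (second-order, not merely strict) concavity estimate that turns total deviation into total utility loss. If you want to salvage your plan, replace the i.i.d.\ Bernoulli income with a two-phase instance of this type; the Jensen-only argument cannot yield a per-step constant gap against an agent who knows the income distribution.
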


\vspace{0.1cm}

The formal version of the Theorem~\ref{thm:main-lookahead}, and the complete (restated) proof from~\cite{nokhiz2024modeling, nokhiz2025counting,nokhiz2025counting1}, are available in Appendix B, for completeness and ease of reference (The appendix to this paper is available after the references.). The proof itself uses a carefully defined class of income distributions and applies Yao's min-max theorem~\cite{motwani-raghavan} to focus the argument on deterministic algorithms, to demonstrate the desired utility gap.

\vspace{0.1cm}

Theorem~\ref{thm:main-lookahead} implies that even if the worker without lookahead is behaving optimally, the worker with lookahead consuming optimally is always guaranteed to have an advantage. That is, the gap between the utilities of the worker who knows their schedule in advance with that of the worker without a lookahead, increases with $k$. Therefore, the more lookahead the individual has, the more utility they can acquire. This also suggests that a worker who is aware of their work schedule will consistently experience greater utility from consumption, as they can manage their finances more effectively and exercise greater autonomy.

To visually summarize the discussions in {\S\ref{sec:obligatory}}, {\S\ref{sec:impulse}}, and {\S\ref{sec:work}}, Figure {\ref{fig:conceptual}} shows a conceptual diagram/flowchart of all scenarios, their characteristics, and potential outcomes.

Additionally, some experimental results with real-world data inspirations \cite{fedres, minsub, patnaik2022role, cooper2014discounting, ericson2019intertemporal} on the scenarios discussed in these sections are presented in Appendix D. Note that the experiments are included solely to illustrate key theoretical insights in action; they are not the focus of the contribution. Their role is supportive and interpretive, not foundational.

\section{Suggested Solutions: Intervening by Consumer Value Deliberation}

\label{sec:solutions}

The results presented highlight a nuanced outlook: while the intuitive understanding is that a lack of control is detrimental, the precise formal long-term consequences are often uncertain and rarely discussed. Our results formally demonstrate the significance of these aftereffects: Many consumers fall prey to value capture, where externally imposed values (such as those embedded in manipulative marketing) are accepted without critical reflection or personal adaptation \cite{nguyen2024value}. Additionally, they encounter systemic barriers such as mandatory consumption and suboptimal practices like faultily executed automated work scheduling, which further reduce consumer agency.

Thus, we argue that consumer agency is a \emph{value} to be nurtured rather than an inherent given.  These theoretical insights (particularly those addressing situations intensified by algorithmic influences) serve as a tangible reminder that fostering consumption as a value holds greater significance in today's economy. 

\subsection{True Consumption Agency}

Fostering agency can be seen as moving toward true agency which means making independent, informed, and deliberate choices about consumption, free from external pressures, manipulative tactics, or systemic constraints. True agency allows individuals to make decisions based on their genuine needs and financial circumstances, resisting the forces that encourage unnecessary or excessive consumption.

\vspace{0.1cm}

In an ideal world, where individuals enjoyed this level of agency and were trying to maximize their financial utility, they would be capable of avoiding financial ruin indefinitely (we formally prove this in Theorem 5 in Appendix A). However, true consumption agency may be an overly idealistic concept, given the reality that certain expenses, such as food, rent, and legal obligations like taxes and child support, are unavoidable and essential for maintaining a basic standard of living. While true agency may be an aspirational ideal, striving toward it as a value provides a framework for developing practical solutions that enhance consumer autonomy.

\vspace{0.1cm}

Therefore, we propose \emph{consumption value deliberation as a guiding mindset} for developing interventions at both individual and societal levels. This approach seeks to address unavoidable financial obligations while empowering consumers to exercise greater consumption autonomy. Value deliberation empowers consumers by enabling them to make more informed choices. This process, for both consumers and policy-makers, involves actively evaluating competing values, needs, and preferences that shape spending and saving decisions, thereby taking incremental steps toward true agency.

\subsection{Individual Consumer Value Deliberation}
\label{sec:main-intervention}

At the individual level, value deliberation-based interventions lead to consumer empowerment by enabling individuals to \emph{consciously prioritize their genuine needs above all else}. This type of intervention is inspired by previous work emphasizing that budgeting helps with a clearer understanding of one's financial situation, leading to better decision-making and increased financial stability \cite{archuleta2013budgeting, de2020impact, neubauer2017financial}. Examples include,

\vspace{0.2cm}

\textbf{Encourage Consumer Reflection and Awareness: }Inform consumers on how algorithmic marketing manipulates their choices. Promote critical reflections to help consumers identify and prioritize their authentic needs.

\vspace{0.1cm}

\textbf{Promote Financial Literacy and Resistance Strategies:} Equip consumers with skills that help them with budgeting and resistance to spending on the fly on algorithmic nudges.
Develop educational resources to help consumers regain control over their decision-making processes in the digital environment.

\vspace{0.1cm}

\textbf{Controlled Use of Algorithmic Suggestions:} Advocate for treating algorithmically-generated recommendations as tools for exploration rather than definitive guides. Consumers should use recommendation systems critically, adapting them to their personal values rather than adopting them wholesale \cite{forbes2024ethical}.

\vspace{0.1cm}

\vspace{0.1cm}

To assess this category of interventions, we formally demonstrate their impact. Specifically, we assume that individuals allocate their financial resources initially to non-negotiable essential expenses (such as food, shelter, taxes, transportation) and other fundamental and obligatory needs. Any remaining funds are then budgeted for discretionary (opt-in) purchases, which are frequently shaped by algorithm-driven influences.

Consequently, let $b$ be the sum of minimum subsistence, obligatory consumption, and a fixed set of opted-in impulsive consumptions. As long as the remaining variable fraction of the impulsive consumption is $< a_t+y_{t+1}-b$ the agent can avoid ruin. That is,

\vspace{0.1cm}

\begin{theorem}\label{thm:ture-agency}
For simplicity, assume $R_t = 1$. Also assume $\beta > 1/2$. Let $y_t \ge b$ (where $b$ is the fixed non-negotiable expenses) for all $t$. Then the optimal consumption of a utility-maximizing consumer would allow them to avoid ruin. 
\end{theorem}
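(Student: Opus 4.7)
The plan is to prove this by contradiction: exhibit a concrete feasible policy that avoids ruin, and then show that any strategy which does ruin is strictly dominated, which forces the optimum to also avoid ruin.

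First I would introduce the stationary subsistence policy $\pi_0$ defined by $c_t = b$ for every $t$. Since $y_t \ge b$, the asset recursion $a_{t+1} = a_t + y_t - b \ge a_t$ yields $a_t \ge a_0 > 0$ for all $t$, so $\pi_0$ is feasible and never ruins, with discounted utility $U_0 = u(b)/(1-\beta)$. This gives a finite lower bound on the value achievable by a utility-maximizing consumer, so any candidate optimal policy must at least match it.

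Next, suppose for contradiction that an optimal policy $\pi^*$ ruins at some finite time $T$, meaning $a_{T+1} \le 0$. Asset accounting then forces $\sum_{t=0}^{T} c_t^* \ge a_0 + \sum_{t=0}^{T} y_t$, and once $a_{T+1} \le 0$ the mandatory-consumption requirement $c_t \ge b$ can no longer be met, so the tail contribution from $t > T$ provides at best a bounded term. The total utility of $\pi^*$ is therefore bounded by $\sum_{t=0}^{T} \beta^t u(c_t^*)$, and applying the weighted Jensen inequality (as in the proof of Theorem~\ref{thm:oblig-ruin}) with weights $\beta^t / \sum_{s=0}^{T} \beta^s$ gives
\[
\sum_{t=0}^{T} \beta^t u(c_t^*) \le \frac{1-\beta^{T+1}}{1-\beta}\, u\!\left(\frac{\sum_{t=0}^{T} \beta^t c_t^*}{\sum_{t=0}^{T} \beta^t}\right).
\]
I would then compare this upper bound against $U_0$: the hypotheses $\beta > 1/2$ and $y_t \ge b$ ensure that the geometric tail $\beta^{T+1}/(1-\beta)$ retains enough weight that the truncated pre-ruin utility cannot overtake the infinite-horizon subsistence utility at $b$, producing the strict inequality on the value of $\pi^*$ versus $U_0$ and contradicting optimality.

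The main obstacle will be making that final comparison quantitative: I need to control the weighted average consumption on the right-hand side in terms of $b$, the mean income, and the initial assets uniformly across all horizons $T$ and all budget-feasible profiles $(c_t^*)$. A tidier alternative, which I would pursue if the Jensen comparison turns out awkward, is an exchange argument that redistributes any excess mass $\sum_{t=0}^{T}(c_t^* - b)$ uniformly across an infinite horizon; concavity of $u$ makes this smoothing utility-improving whenever the discount factor is not too aggressive, which is exactly what the $\beta > 1/2$ condition guarantees. Either route concludes that the optimal policy satisfies $a_{t+1} > 0$ for every $t$, i.e.\ the consumer avoids ruin.
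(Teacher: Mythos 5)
There is a genuine gap in your primary route. The benchmark you propose to contradict --- the stationary subsistence value $U_0 = u(b)/(1-\beta)$ --- is the wrong one: a policy that exhausts assets in finite time can have discounted utility far \emph{above} $U_0$ when initial assets are large. For instance, with $u(c)=\sqrt{c}$ and $a_0$ huge, consuming essentially all of $a_0$ at $t=0$ already yields roughly $\sqrt{a_0}$, which exceeds $\sqrt{b}/(1-\beta)$ for $a_0$ large enough. So the inequality you need --- that the truncated pre-ruin utility ``cannot overtake'' $U_0$ --- is simply false, no matter how carefully you run the weighted Jensen bound, and no contradiction with optimality of a ruining policy can be extracted this way. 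The obstacle you flag at the end is not a matter of making the comparison quantitative; the comparison itself cannot succeed.

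The repair is to compare the terminating sequence not against the subsistence policy but against a modification \emph{of itself}: keep $c_0,\dots,c_{T-1}$ untouched and append a perpetual tail of $b$ after time $T$, which is feasible because future income satisfies $y_t \ge b$ (the tail is funded by income, not by redistributing the ``excess mass'' $\sum_t (c_t^*-b)$ as your fallback suggests --- redistributing would perturb the early, high-utility periods and create exactly the control problem you are worried about). This is what the paper does: it splits into the cases $c_T \ge 2b$ (consume $c_T - b$ at time $T$, then $b$ forever; subadditivity of concave $u$ gives $u(c_T-b)+u(b)\ge u(c_T)$, and $\beta>1/2$ gives $\beta/(1-\beta)\ge 1$ so the tail more than compensates) and $c_T < 2b$ (replace $c_T$ by $b$ forever; concavity gives $u(b)\ge(1-\beta)u(c_T)$, hence $\frac{1}{1-\beta}u(b)\ge u(c_T)$). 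Both cases show the infinite-horizon modification weakly dominates, which is the conclusion you want. Your Step 1 (feasibility of the $c_t\equiv b$ policy) is correct but ends up playing no role in the actual argument.
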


\vspace{0.1cm}

\textbf{Concept Sketch.} Before presenting the formal proof of Theorem~{\ref{thm:ture-agency}}, we briefly outline the core idea. Under the given assumptions, we show that if there exists a consumption path that terminates at some finite time $t$, it is possible to construct an alternative infinite-horizon consumption sequence that yields strictly higher utility. The formal proof is given below. 

\vspace{0.1cm}

\begin{proof}
Assume otherwise. The agent would have to stop consuming at some time point $T$ for this to be the case. Now consider the optimal consumption sequence $\{c_0,c_1,c_2,\dots,c_T\}$.

\vspace{0.2cm}

{\bf Case 1}: If $c_T \ge 2b$.

In this case consider the new consumption sequence, $\{c_0,c_1,c_2,\dots,c_T-b,b,b,b,\dots\}$ where the agent consumes $c_T-b$ at time step $T$ and keeps on consuming $b$ after that. It is easy to see that this is a feasible solution. We can see that the change in utility, $\delta_u$, is 
\begin{align*}
\delta_u &=\left( \sum_{t=0}^{T-1} \beta^t u(c_i)+\beta^T u(c_T-b)+\sum_{t=T+1}^\infty \beta^t u(b)\right)\\
&-\left( \sum_{t=0}^{T-1} \beta^t u(c_i)+\beta^T u(c_T)\right)\\
&=\beta^T \left(u(c_T-b)+\frac{\beta}{1-\beta}u(b)-u(c_T)\right) \\
& \ge \beta^T \left(u(c_T-b)+u(b)-u(c_T)\right) \\
& \ge 0
\end{align*}
The first inequality simply uses the fact that $\beta > 1/2$ and the second inequality is due to the fact that $u(.)$ is concave and therefore $u(c_T-b)+u(b) \ge u(c_T)$. Thus, in this case, we get a consumption sequence of infinite length that yields better utility.
\vspace{0.2cm}

{\bf Case 2}: If $c_T < 2b$.

We can see that there exists a $d > 0$ such that $(1-\beta)c_T+\beta d = b$ (because $1-\beta < 1/2$ and $c_T < 2b$). From this, using the concavity of $u(.)$, we can see that $u(b) = u((1-\beta)c_T+\beta d) \ge (1-\beta)u(c_T)+\beta u(d) \ge (1-\beta)u(c_T)$. Now we can consider the new consumption sequence where $\{c_0,c_1,c_2,\dots,c_{T-1},b,b,b,\dots\}$, where the agent keeps on consuming $b$ after time step $T-1$. It is easy to see that this is a feasible solution. The change in utility would be,
\begin{align*}
\delta_u &=\left( \sum_{t=0}^{T-1} \beta^t u(c_i)+\sum_{t=T}^\infty \beta^t u(b)\right)\\
&-\left( \sum_{t=0}^{T-1} \beta^t u(c_i)+\beta^T u(c_T)\right)\\
&=\beta^T \left(\frac{1}{1-\beta}u(b)-u(c_T)\right)\\
& \ge 0
\end{align*}
The inequality comes from the fact we established earlier. Therefore, in this case, we get a consumption sequence of infinite length that yields better utility. We can observe that in either case, for any finite consumption sequence, there is a way to improve it with an infinite horizon. Therefore, with a sufficiently high income to cover their basic needs, consumers can maximize their utility while still avoiding early ruin.

\end{proof}

To summarize, by deploying interventions that increase self-discipline to allocate the funds for the products or services truly needed (and then evaluating the remaining amount for discretionary spending), consumers can gain greater control over their finances and avoid early ruin. 

\vspace{0.2cm}

\textbf{Remark on Return Rates and $\beta$.} 
Theorem~{\ref{thm:ture-agency}} assumes $R_t = 1$ and $\beta > 1/2$. The assumption $R_t = 1$ is made for simplicity, but the argument does not depend on this specific value and remains valid for any $R_t \ge 1$. The condition $\beta > 1/2$ reflects the idea that when $\beta$ is too small, the incentive to delay consumption diminishes, leading consumers to prefer immediate consumption in order to maximize utility. Ensuring that $\beta$ is sufficiently large prevents the resulting behavior from being solely driven by discounting effects.

\subsection{Ethical and Social Regulations}

On a societal level, consumption value deliberation can happen by policy-makers implementing regulations that recognize the importance of respecting consumer autonomy, enhancing their agency, and ultimately guiding them toward achieving true consumption agency. Examples include,

\vspace{0.1cm}

\textbf{Promote Ethical AI Development:}
Encourage the creation and implementation of ethical AI systems that prioritize consumer and worker well-being to avoid disturbing the delicate balance between creating persuasive advertising and exploiting the customer \cite{forbes2024ethical}. This can be achieved by setting clear ethical guidelines \cite{forbes2024ethical} to prevent manipulative tactics and providing incentives (such as enhancing employee morale and gaining a competitive edge) to companies that prioritize responsible AI practices, particularly in areas like work scheduling.

\vspace{0.1cm}

 \textbf{Algorithm Regulation:} Regulating algorithms can help reduce uncertainties related to financial planning. For instance, the Schedules that Work Act of 2014 (H.R. 5159), introduced in the U.S. Congress, requires employers to provide employees with at least two weeks' notice of schedule changes and minimum hours \cite{golden2015irregular}. This Act acknowledges that employees should have the ability to make informed decisions about their personal time and should not be forced to adapt to sudden changes imposed by employers. This value is rooted in respect for individual rights and personal choice. 

\vspace{0.1cm}

In another example, the San Francisco Board of Supervisors has improved protections for hourly workers in retail chain stores by adopting provisions from the Retail Workers Bill of Rights. These new regulations require employers to provide greater advance notice when setting/changing work schedules, aiming to improve predictability for employees \cite{golden2015irregular}.

\vspace{0.1cm}

In the context of work scheduling, implementing regulations like the examples above has shown positive effects. Studies like \cite{nokhiz2024modeling, nokhiz2025counting,nokhiz2025counting1} reveal that employees who are given advanced notice of schedule changes experience improved financial well-being. Advanced planning allows workers with variable schedules to gain greater autonomy over their financial planning. That is, they can manage their consumption and savings more effectively, which leads to better financial decision-making and agency.

\section{Limitations and Future Work}
\label{sec:limitations}

In this section, we present discussions and list some of the limitations of our study.

\vspace{0.1cm}

\textbf{Intertemporal Consumption Models.} The model employed here is intentionally simple, relying on utility-maximizing rational agents. This does not imply that we believe individuals in the real world are perfectly rational utility maximizers. Rather, our focus is on demonstrating that even within this straightforward representation of reality, where agents make the most optimal decisions possible, the adverse effects of a lack of agency in consumption are still evident. Furthermore, while discounted utility models may not be as innovative or advanced as some newer alternatives, they remain the most commonly used framework for analyzing intertemporal choice \cite{cohen2020measuring}.

\vspace{0.1cm}

However, the absence of debt and other liabilities is a limitation in our framework. For example, in real-world scenarios, controlled debt, such as credit card usage, significantly influences consumption patterns. Consumers often rely on credit even when they lack sufficient assets to cover their expenses. However, our model does not account for these scenarios. Debt plays a dual role in consumption: it acts as an enabler, allowing individuals to make purchases they might otherwise forgo, while simultaneously introducing hidden costs through the accumulation of interest and long-term debt burdens. Additional details on debt (and discounting \cite{ainslie1992hyperbolic, frederick2002time}) can be found in Appendix C.

\vspace{0.1cm}

Additionally, consumers may not only be individuals but also organizations or communities, each aiming to maximize multiple objectives that may not necessarily align with utility. For instance, they might prioritize minimizing expenses or risk, maximizing social welfare, environmental sustainability, profit, and so on. Our model does not account for these cases.

\vspace{0.1cm}

Furthermore, when using consumption models to analyze intertemporal decision-making, it is crucial for researchers, policymakers, and advisors to account for the normative commitments that may be overlooked \cite{laufer2023optimization, nokhiz2025rethinking, nokhiz2017understanding}. For example, incorporating ethical considerations, such as the impact of advertising or the psychological effects of social comparison, could lead to more comprehensive models.

\vspace{0.1cm}

Future research in this area could explore how to incorporate these normative dimensions into optimization models. This could involve developing frameworks that not only account for traditional economic factors but also consider the ethical implications of consumption decisions. 

\vspace{0.2cm}

\textbf{Behavioral Economics, Cognitive Impacts, and Prospect Theory.} While our model is grounded in the rational utility-maximizing framework (due to the reasons mentioned in the previous paragraph), we acknowledge that real-world consumer behavior gets impacted by cognitive factors and behavioral elements. Behavioral economics has demonstrated that individuals frequently exhibit present bias \cite{laibson1997golden}, loss aversion and reference dependence \cite{kahneman1979prospect}, and inconsistent preferences under uncertainty \cite{tversky1992advances}. These tendencies can significantly impact consumption decisions. For instance, present-biased consumers may be particularly vulnerable to short-term promotions or Buy Now Pay Later schemes, choosing immediate gratification despite long-term financial harm \cite{o2011time}. Similarly, the framing of losses in subscription-based services or scarcity cues in digital marketing can exploit reference dependence, causing consumers to over-consume or stick to suboptimal payment commitments. Prospect theory \cite{kahneman1979prospect} provides a more psychologically realistic alternative by capturing such anomalies through non-linear probability weighting and asymmetric value functions.

While these extensions would undoubtedly enrich the descriptive accuracy of our framework, our decision to use a classical discounted utility model was deliberate: it allows us to isolate the structural impacts of agency constraints in a clean and tractable way, ensuring that any observed harms are not artifacts of irrational behavior but instead emerge even under optimal decision-making. This offers a tractable estimate of the risks posed by limited agency, meaning that under behavioral models, these effects may be stronger or more pervasive. Future work could build on this foundation by embedding bounded rationality and behavioral parameters into the model to evaluate their compounding effects.

\vspace{0.1cm}

 \textbf{Societal Uniformity.} Our paper assumes a level of societal uniformity with all agents facing similar external forces and constraints, however, there exists societal inequities where certain groups are targeted more by algorithms or have more payment obligations. Furthermore, the likelihood of benefiting from public policies aimed at addressing these disparities may also vary among different demographic groups. For example,  if the individual is a gig worker, some of the policy interventions such as a minimum window of advance notice for altering work schedules do not apply to them. If there exists diversity among individuals in managing their finances based on a specific community they are part of, the system could try to incorporate this heterogeneity into its decision-making processes.

 Additionally, lower-income individuals, gig workers, or those with limited digital literacy, are more likely to rely on platforms with dark patterns and  Buy Now Pay Later systems. These dynamics are further compounded by racial and ethnic disparities, as marginalized communities may be overrepresented in precarious labor sectors and underprotected by regulatory infrastructure \cite{schneider2017income}. Thus, the risks we model may in fact be amplified for these groups, making the normative question of consumption agency all the more urgent.

This recognition has direct implications for the design of targeted interventions: First, policy frameworks could prioritize sector-specific algorithmic regulation, particularly in domains like retail, delivery, and food service, where unpredictable scheduling and surveillance technologies are common. For instance, existing initiatives like the Retail Workers Bill of Rights in San Francisco can be endorsed more carefully for the most affected workers. Second, consumer protection strategies can be sensitive to digital manipulation risks among lower-income populations, like opt-in delays for  Buy Now Pay Later usage on platforms where these features disproportionately exploit financial vulnerability. Third, financial education efforts must be context-aware: programs targeting low-income or gig workers should not assume ample time, attention, or technological access. Instead, these programs can be delivered via workplace partnerships, mobile-accessible content, or simplified budgeting tools that accommodate erratic income. Finally, consumer-facing transparency requirements (for example, algorithmic decision notices or cost-accumulation alerts) could help mitigate manipulation in environments where deliberation is otherwise undermined.

\vspace{0.1cm}

\textbf{Other Future Directions: Macro Socioeconomical Impacts.} There are external socioeconomic impacts on one's income that we have not considered as deciding factors in our study, e.g., inflation erodes the purchasing power of consumers, making it more difficult for individuals to maintain their standard of living and save for future consumption. Similarly, unexpected social emergencies, such as pandemics, can lead to financial precarity with sudden unemployment, reduced income, or increased health-related expenses \cite{nokhiz2021precarity}, significantly affecting consumption patterns.

Like other foundational economic models, ours offers a foundation for the future. More complex extensions (with more complex macro factors) are possible. So to expand upon and complement the aforementioned macro-level factors, one may consider:
\begin{itemize}
    \item Inflation: As mentioned earlier, inflation reduces the real value of income and savings over time, potentially accelerating financial ruin even under otherwise sustainable consumption plans.
\item Economic Recessions: Lead to increased unemployment, reduced wages, and greater scheduling instability, further constraining agency and raising the likelihood of ruin.
\item Public Health or Emergency Crises (e.g., pandemics): Trigger sudden behavioral changes in spending (e.g., panic buying), disrupt work schedules, and create new obligatory expenses (e.g., healthcare and remote work infrastructure).

\item Government-based regulations like subsidies or stimulus packages can also influence income levels and consumption decisions by providing temporary relief or incentives for certain types of spending \cite{nokhiz2024agent}, such as renewable energy adoption or housing investments.

\item Also, economic recessions or global supply chain disruptions may result in income instability and increased costs of goods and services \cite{nber2024}, further constraining consumer choices.
\end{itemize}

 These impacts and the underlying population's response to them (as well as the recourse to change circumstances \cite{gupta2019equalizing}) differ significantly based on factors such as gender \cite{blau2017gender, reader2022models}, disability \cite{whittaker2019disability}, race \cite{zwerling1992race}, and health status \cite{hicken2014racial}. While these factors are critical in shaping consumption behavior, they fall outside the scope of our current study and remain important areas for the future. 

\vspace{0.2cm}

\textbf{On the Motivation Behind Proposed Interventions \& More Specific Interventions.} To ground our recommendations in real-world practices, we draw on existing policy interventions as illustrative cases. For instance, the Schedules that Work Act in the U.S. Congress and San Francisco’s Retail Workers Bill of Rights demonstrate the viability of regulating algorithmic work scheduling by mandating advance notice and predictable hours. 

These initiatives offer promising templates but also highlight implementation challenges. Businesses may face increased compliance costs, particularly small or resource-constrained employers who rely on just-in-time scheduling. Similarly, calls for algorithmic transparency may encounter pushback due to proprietary concerns or technical opacity. On the consumer side, education campaigns must contend with low engagement, information fatigue, and resistance to behavior change. Therefore, any effective intervention must balance regulatory ambition with practical feasibility, incorporating incentives, phased rollouts, and stakeholder input to ensure both compliance and impact. Future work could formalize these trade-offs and assess policy outcomes through simulations or field studies.

We contend that consumer agency was limited even before the prevalent use of algorithms, but automation has made existing issues more widespread. Thus, to motivate our interventions and through the proposed interventions, we aim to support gradual improvements to empower consumers,  but they may not fully restore agency in all contexts.

That is, the specificity, effectiveness, and desirability of certain interventions are very context-dependent. Therefore, our proposals are intended as part of an ongoing, necessary conversation about how to intervene in algorithmic systems that increasingly shape consumption. The interventions' aim was not to present a specific \textit{finished} solution but to reframe the debate around actionable steps given the algorithmic erosion of agency. Interventions create friction in exploitative processes and empower consumers within those constraints. Relevant contributions also add value to current debates on responsible AI and digital governance.

Nonetheless, the intervention we formally analyze in {\S\ref{sec:main-intervention}} is very specific and focuses on a more concrete objective; namely, encouraging effective budgeting and value-based deliberation in consumption decisions.

\section{Conclusion}
\label{sec:conclusions}

This paper's core contribution is a formal analysis of financial instability arising from diminished consumption agency. Our framework tackles three key questions relevant to consumption agency in today’s economy: \begin{itemize}
  \item What are the aftereffects of a lack of agency in consumption, particularly when exacerbated by algorithms? How do these effects manifest across different scenarios, such as obligatory consumption, impulsive algorithm-driven purchases, and dynamic work scheduling? \item Why should consumption be considered a value that needs to be fostered rather than a given in the modern economy? \item What are the effects of educational and policy regulatory interventions served as consumption value deliberation methods (by enabling consumers to achieve true agency and empowering them in their decision-making)?
\end{itemize}

We answer these questions through formal analyses of intertemporal consumption models with limited consumption agency and illustrate the resulting adverse effects (like early financial ruin and diminished consumer utility) that arise when consumer agency is eroded by systemic barriers, algorithmic manipulation, and obligatory consumption patterns. We also illustrate the positive effects of recognizing consumer agency as a value and adopting value deliberation as a mindset for interventions at both individual and societal levels. Potential measures include regulatory oversight to mitigate manipulative practices and consumer education initiatives to improve financial literacy.

\bibliographystyle{unsrt}  
\bibliography{references}

\cleardoublepage 
\appendix
\cleardoublepage
\section{True Agency in an Ideal Setting}\label{sec:agency}

In an ideal world, where individuals enjoy true agency, they would be capable of avoiding financial ruin indefinitely. We formally state this below,

\vspace{0.2cm}

\begin{theorem}\label{thm:agency-thm}
Assuming $Y_t$ ($Y_t \ge 0$) is drawn from a distribution with known mean (denoted by $Y$), if we allow the agent to vary the consumption without any variability constraints other than $0\le c_t \le a_t$, then under isoelastic utility with $\lambda > 0$ (and $\lambda \neq 1$),  
a rational agent with utility maximization goals could consume with infinite horizon without going to ruin early. 
\end{theorem}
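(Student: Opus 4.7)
The plan is to prove the theorem by contradiction: assume the utility-maximizing policy $\pi^*$ induces ruin in finite time on an event of positive probability, and construct a feasible perturbation of $\pi^*$ with strictly larger expected discounted utility. The crucial observation is the Inada property of isoelastic utility with $\lambda > 0$, $\lambda \neq 1$: the marginal utility $u'(c) = c^{-\lambda}$ diverges as $c \to 0^+$, so reserving a small amount of consumption for an otherwise-empty period is disproportionately valuable at the margin. A preliminary step is to characterize ruin under the system dynamics $a_{t+1} = R_t(a_t - c_t) + Y_t$ with $R_t \ge 0$, $Y_t \ge 0$, and $c_t \le a_t$: assets are always nonnegative, so $a_{T+1} \le 0$ forces $a_{T+1} = 0$, which (in the generic case $R_T > 0$) requires both $c_T = a_T$ and $Y_T = 0$, after which $a_t = c_t = 0$ for every $t > T$ on the ruin trajectory.

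Let $\tau$ be the first time $\pi^*$ chooses $c_\tau = a_\tau$; it is an $\mathcal{F}_\tau$-stopping time and finite on the ruin event. I would define the perturbation $\pi^\epsilon$ to agree with $\pi^*$ through time $\tau - 1$, consume $a_\tau - \epsilon$ at time $\tau$ rather than $a_\tau$, and from time $\tau+1$ onward follow the constant-fraction rule $c_t = \gamma a_t$ for a fixed $\gamma \in (0,1)$ chosen small enough that the continuation utility sum converges. The expected-utility change then decomposes as the time-$\tau$ loss bounded by $\beta^\tau \epsilon \, u'(a_\tau - \epsilon) = O(\epsilon)$ against the continuation gain. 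In the worst case of zero subsequent income, the continuation consumption under $\pi^\epsilon$ decays geometrically from an initial order of $\epsilon$, and its total discounted CRRA utility evaluates in closed form to $\Theta(\epsilon^{1-\lambda})$ up to a positive factor depending on $\beta$, $\gamma$, and the distribution of $R_t$. For $0 < \lambda < 1$ this gain is strictly positive, and since $\epsilon^{1-\lambda}/\epsilon = \epsilon^{-\lambda} \to \infty$ it dominates the $O(\epsilon)$ loss for all sufficiently small $\epsilon$; for $\lambda > 1$, the ruin continuation already contributes $\sum_{k \ge 1} \beta^{\tau+k} u(0) = -\infty$, so any finite-utility alternative trivially dominates. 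Either way, $\pi^\epsilon$ strictly improves on $\pi^*$, contradicting optimality.

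The main obstacle I expect is keeping the stochastic and measurability bookkeeping tight: the perturbation must be adapted to the agent's information filtration, and the utility comparison must be taken in expectation over the random ruin trajectory. I would address this by restricting the perturbation to the $\mathcal{F}_\tau$-measurable event $\{c_\tau = a_\tau\}$, so that the same $\epsilon$-displacement applies uniformly on it, and then integrating over this event. A secondary care point is the existence of an attaining optimal policy, which I would bypass either by working with $\delta$-optimal strategies and passing to the limit $\delta \to 0$, or by invoking standard Bellman-equation existence results for CRRA preferences, so that the contradiction argument still goes through.
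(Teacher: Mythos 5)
Your proposal is correct and rests on the same core idea as the paper's proof: assume the optimal policy reaches ruin at some finite time, shave $\epsilon$ off the ruin-time consumption, defer it, and use the shape of isoelastic utility near zero to show the deferred utility outweighs the $O(\epsilon)$ loss, contradicting optimality. The execution differs in a few respects worth noting. The paper defers the $\epsilon$ into a \emph{single} extra period, chooses the split by the first-order condition $\beta u'(\epsilon) = u'(c_T - \epsilon)$, and evaluates $g(\epsilon) = u(c_T-\epsilon) + \beta u(\epsilon)$ in closed form for the isoelastic family, obtaining a uniform inequality that covers all $\lambda \neq 1$ at once; you instead spread the $\epsilon$ over an infinite geometric tail and run an asymptotic comparison ($O(\epsilon)$ loss versus $\Theta(\epsilon^{1-\lambda})$ gain), which forces you to split cases and handle $\lambda > 1$ separately via $u(0) = -\infty$. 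Your version buys a cleaner conceptual story (the Inada condition does all the work) and a more honest treatment of the stochastic setting --- stopping times, adaptedness, and the existence of an optimizer via $\delta$-optimal policies, none of which the paper addresses --- while the paper's buys a single unified computation with no case analysis. One caution: your $\Theta(\epsilon^{1-\lambda})$ gain for $\lambda \in (0,1)$ implicitly charges the ruin continuation $u(0)$ per period; if post-ruin periods are instead simply truncated from the objective (as the paper's other proofs do), the normalized utility $u(c) = (c^{1-\lambda}-1)/(1-\lambda)$ makes each small-consumption period contribute a negative amount bounded away from zero, and the comparison needs the unnormalized form $c^{1-\lambda}/(1-\lambda)$ to go through. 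The paper is equally loose on this point, but it is worth pinning down the post-ruin accounting convention explicitly.
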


\vspace{0.2cm}

The proof of the Theorem~\ref{thm:agency-thm} is as follows,

\vspace{0.2cm}

\begin{proof}
Assume you are given parameter $\beta$ and utility function $u(c)$. Consider the optimal consumption sequence that you get where the agent goes to ruin at time point $T$ (here, asset ruin is assumed to be the point where the available assets for the next iteration reach $0$). Let this be $C = \{c_{1},c_{2},\dots,c_{T}\}$. This is possible only if $Y_{T+1} = 0$ since $c_T \le a_T$. 

Now consider the amended sequence where $C^* = \{c_{1},c_{2},\dots,c_{T} - \epsilon, \epsilon, \dots\}$. We can see that this does not go to asset ruin at point $T$ since we consume less than $c_T$ and we still have $\epsilon$ left. 

We can see that the total utility for $C$ is $\sum_{t=1}^{T-1}\beta^t u(c_t)+\beta^T \cdot u(c_T)$ and we can also see that the total utility for $C^*$ is $>\sum_{t=1}^{T-1}\beta^t u(c_t)+\beta^T \cdot u(c_T-\epsilon)+\beta^{T+1} \cdot u (\epsilon)$. Therefore, the change in utility is,
\begin{align*}
\delta_u &> \left(\sum_{t=1}^{T-1}\beta^t u(c_t)+\beta^T \cdot u(c_T-\epsilon)+\beta^{T+1} \cdot u (\epsilon)\right)\\
&-\left(\sum_{t=1}^{T-1}\beta^t u(c_t)+\beta^T \cdot u(c_T)\right)\\
&> \beta^T \cdot u(c_T-\epsilon)+\beta^{T+1} \cdot u (\epsilon)-\beta^T \cdot u(c_T)
\end{align*}

Now what we are left to do is to argue that there exists an $\epsilon>0$ that satisfies $\beta^T \cdot u(c_T- \epsilon)+\beta^{T+1} \cdot u (\epsilon) \ge \beta^T \cdot u(c_T) \implies u(c_T- \epsilon)+\beta \cdot  u(\epsilon) \ge u(c_T)$. Let $g(\epsilon) = u(c_T- \epsilon)+\beta \cdot  u(\epsilon)$, then we would like to claim that there exists a $\epsilon>0$ such that $g(\epsilon) \ge g(0)$. Consider the $\epsilon$ that maximizes $g(.)$, then for this $\beta u'(\epsilon) = u'(c_T-\epsilon)$. 

\vspace{0.2cm}

\textbf{Isoelastic utility:} Let $u(c)=\frac{c^{1-\lambda}-1}{1-\lambda}$. Assume $\lambda \neq 1$. Let $\epsilon >0 $ be such that $\beta \epsilon^{-\lambda} = (c_T-\epsilon)^{-\lambda} \implies \beta u'(\epsilon) = u'(c_T-\epsilon)$. For notational convinience, let $s=\beta \epsilon^{-\lambda} = (c_T-\epsilon)^{-\lambda}$ Then, we can see that,
\begin{align*}
g(\epsilon) &= \frac{(c_T-\epsilon)^{1-\lambda}-1}{1-\lambda} + \beta \frac{(\epsilon)^{1-\lambda}-1}{1-\lambda} \\
& = \frac{(c_T-\epsilon)}{1-\lambda} (c_T-\epsilon)^{-\lambda} + \beta (\epsilon)^{-\lambda}\frac{(\epsilon)}{1-\lambda}-\frac{1+\beta}{1-\lambda}\\
&= \frac{c_T}{1-\lambda} s -\frac{1+\beta}{1-\lambda}\\
& = \frac{c_T^{1-\lambda}}{1-\lambda} \frac{c_T^\lambda}{(c_T-\epsilon)^\lambda} -\frac{1+\beta}{1-\lambda} \\
&= \frac{c_T^{1-\lambda}}{1-\lambda} \frac{c_T}{(c_T-\epsilon)}^\lambda -\frac{1+\beta}{1-\lambda}\\
& = \frac{c_T^{1-\lambda}}{1-\lambda} \left(1+\frac{\epsilon}{(c_T-\epsilon)}\right)^\lambda -\frac{1+\beta}{1-\lambda}\\
&= \frac{c_T^{1-\lambda}}{1-\lambda} \left(1+\beta^{1/\lambda}\right)^\lambda -\frac{1+\beta}{1-\lambda}\\
&\ge (1+\beta) \frac{c_T^{1-\lambda}-1}{1-\lambda} \ge g(0)
\end{align*}
as desired.
\end{proof}

\section{Work Scheduling: Formal Theorem and the Corresponding Proofs (Restated from \texorpdfstring{\cite{nokhiz2024modeling,nokhiz2025counting,nokhiz2025counting1}}{109,110,111})}\label{sec:appendix-work-schedule}

Let $ y_t $, $ x_t $, and $ c_t $ represent income, assets, and consumption at time $ t $, respectively, with $ T $ indicating the job duration and $ u(c) = \sqrt{c} $ representing the utility derived from consumption $ c $. Also, assume no discounting ($ \beta = 1 $) and that income $ y_t $ lies within the range $ [0, Y] $. Previous research in ~\cite{nokhiz2024modeling,nokhiz2025counting,nokhiz2025counting1} shows that in certain cases, a worker with lookahead privileges has a distinct advantage over one without, and this benefit increases linearly with the amount of lookahead.

Before presenting the main theorem, \cite{nokhiz2024modeling,nokhiz2025counting,nokhiz2025counting1} introduces the following lemma:  
\vspace{0.2cm}

\begin{lemma}\label{lem:lb-helper}
Let $ a \in (1/2, 1) $ be a constant, and $ w \in (0,1) $. Then:  
\[ \sqrt{w} \le \sqrt{a} + \frac{1}{2\sqrt{a}} (w-a) - \frac{1}{8} (w-a)^2. \]  
\end{lemma}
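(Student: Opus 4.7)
The plan is to prove the inequality by Taylor's theorem with the Lagrange form of the remainder, applied to $f(w) = \sqrt{w}$ expanded around the point $a$. Since $f'(w) = \tfrac{1}{2\sqrt{w}}$ and $f''(w) = -\tfrac{1}{4}w^{-3/2}$, Taylor's theorem guarantees some $\xi$ strictly between $w$ and $a$ with
\[
\sqrt{w} \;=\; \sqrt{a} + \frac{1}{2\sqrt{a}}(w-a) - \frac{1}{8}\,\xi^{-3/2}\,(w-a)^2.
\]
Thus the target inequality reduces to $-\tfrac{1}{8}\xi^{-3/2}(w-a)^2 \le -\tfrac{1}{8}(w-a)^2$, i.e.\ to showing $\xi^{-3/2} \ge 1$, i.e.\ $\xi \le 1$.

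This is precisely where the hypotheses are used: since $a \in (1/2,1) \subset (0,1)$ and $w \in (0,1)$, any interpolating point $\xi$ lying between them also satisfies $\xi \in (0,1)$, so $\xi^{-3/2} > 1$ and the required bound follows, with equality only in the degenerate case $w = a$. As a self-contained cross-check (avoiding a direct appeal to Taylor's theorem), I would alternatively define the gap
\[
g(w) \;=\; \sqrt{a} + \frac{1}{2\sqrt{a}}(w-a) - \frac{1}{8}(w-a)^2 - \sqrt{w},
\]
verify $g(a) = 0$ and $g'(a) = 0$ by direct substitution, and then compute $g''(w) = \tfrac{1}{4}(w^{-3/2} - 1)$, which is strictly positive on $(0,1)$. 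Hence $g$ is strictly convex on $(0,1)$ with its unique critical point at $a$ where $g(a) = 0$, so $g(w) \ge 0$ throughout, exactly the claim.

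There is no substantive obstacle here; the entire content of the lemma is the observation that the constant $\tfrac{1}{8}$ dominates the true quadratic Taylor coefficient $\tfrac{1}{8}\xi^{-3/2}$ in the wrong direction, which is recovered once one notes $\xi < 1$ and therefore $\xi^{-3/2} > 1$. Finally, I notice that the hypothesis $a > 1/2$ is not required for the lemma itself (only $a < 1$ is used); presumably the stronger lower bound is needed downstream when this lemma is plugged into the lookahead-utility comparison of Theorem~\ref{thm:main-lookahead}.
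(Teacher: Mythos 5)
Your proof is correct, but it takes a genuinely different route from the paper's. The paper avoids calculus entirely: it computes the exact error term by the difference-of-squares factorization
\[
\sqrt{w}-\sqrt{a}-\frac{1}{2\sqrt{a}}(w-a) \;=\; -\frac{(w-a)^2}{2\sqrt{a}\left(\sqrt{w}+\sqrt{a}\right)^2},
\]
and then simply notes that the denominator $2\sqrt{a}(\sqrt{w}+\sqrt{a})^2$ is at most $8$ when $a,w\in(0,1)$. You instead invoke Taylor's theorem with the Lagrange remainder, reducing the claim to $\xi^{-3/2}\ge 1$ for the unspecified intermediate point $\xi\in(0,1)$, and back it up with a convexity argument on the gap function $g$ (whose computations $g(a)=g'(a)=0$ and $g''(w)=\tfrac{1}{4}(w^{-3/2}-1)>0$ all check out). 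The paper's identity buys an explicit closed form for the remainder with no appeal to the mean value theorem, which is marginally more self-contained; your version is the more standard template and transfers immediately to any concave $f$ whose second derivative is bounded away from zero on the relevant interval. Your closing observation is also consistent with the paper: its proof likewise uses only $a,w<1$ (to bound the denominator by $8$), so the hypothesis $a>1/2$ is not needed for the lemma itself and is presumably carried along for the downstream application where $a=\tfrac{1+x}{2}$.
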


\vspace{0.2cm}

\begin{proof}
We have:
\begin{align*}
&\sqrt{w} - \sqrt{a} - \frac{1}{2\sqrt{a}} (w-a) \\
&= (w-a) \left( \frac{1}{\sqrt{w}+\sqrt{a}} - \frac{1}{2\sqrt{a}} \right) \\
&= \frac{ (w-a)(a-w) }{2\sqrt{a}(\sqrt{w}+\sqrt{a})^2} \\
&= - \frac{ (w-a)^2 }{2\sqrt{a}(\sqrt{w}+\sqrt{a})^2}.
\end{align*}
Since the denominator satisfies $ \leq 8 $, we get $\sqrt{w} \le \sqrt{a} + \frac{1}{2\sqrt{a}} (w-a) - \frac{1}{8} (w-a)^2$.
\end{proof}

\vspace{0.2cm}

\begin{theorem}\label{thm:main-lookahead-formal}
Consider two individuals: one with a lookahead of $ k $ steps and the other without any lookahead. Let $ c_1, c_2, \dots, c_T $ denote the consumption of the individual with $ k $-step lookahead, and $ z_1, z_2, \dots, z_T $ denote the consumption of the individual without lookahead. There exist instances where all incomes lie within the range $ [0, Y] $, such that:
\[
\sum_{t=1}^T \sqrt{c_t} - \sum_{t=1}^T \sqrt{z_t} \geq \Omega(k \sqrt{Y}).
\]
\end{theorem}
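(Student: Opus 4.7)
The plan is to exhibit a carefully chosen family of random income distributions on which the $k$-lookahead player's ability to smooth consumption (exploiting the concavity of $\sqrt{\cdot}$) translates into a linear-in-$k$ utility advantage, and then invoke Yao's minimax principle to reduce the analysis of the no-lookahead player to the best deterministic consumption strategy against this random input. A natural template is income concentrated in occasional bursts of magnitude bounded by $Y$ at stochastic positions: the $k$-lookahead player sees each burst $k$ steps in advance and allocates its value uniformly across a window of length $k$, so that each such window contributes a Jensen-type utility of order $\sqrt{kY}$ rather than the $\sqrt{Y}$ obtainable by consuming the burst on arrival. Choosing the horizon to contain enough bursts makes the lookahead player's total utility exceed any comparison baseline obtainable without foresight by $\Omega(k\sqrt{Y})$.

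On the no-lookahead side, Yao's principle lets us upper-bound the expected utility of any randomized strategy by that of the best deterministic strategy against the same random input. Here the workhorse is Lemma~\ref{lem:lb-helper}: after rescaling so that consumption lies in the range $(1/2, 1)$ where the lemma applies, the quadratic refinement $\sqrt{w} \le \sqrt{a} + \frac{1}{2\sqrt{a}}(w - a) - \tfrac{1}{8}(w-a)^2$ taken around the smoothed reference level $a$ converts any residual variance in the deterministic consumption trajectory into a concrete additive utility penalty $\tfrac{1}{8}(z_t - a)^2$ per step. Since a deterministic strategy cannot align its consumption with the hidden burst positions, the sum of these per-step penalties over the horizon (in expectation over the randomness) yields the matching $\Omega(k\sqrt{Y})$ deficit.

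Putting the two bounds together gives $\sum_t \sqrt{c_t} - \sum_t \sqrt{z_t} \ge \Omega(k\sqrt{Y})$. The main obstacle will be calibrating the burst distribution (subject to the per-step cap $y_t \in [0, Y]$) so that the Lemma~\ref{lem:lb-helper} penalty scales linearly in $k$ rather than as a sublinear root. In particular, a conservative deterministic strategy that defers all consumption until a burst arrives has small variance and is not directly handled by the quadratic penalty; for such strategies the argument must instead be closed by comparing the smoothed Jensen benefit of the lookahead player against the suboptimal ``consume-on-arrival'' utility of the no-lookahead player. A case split between ``high-variance'' and ``conservative'' deterministic strategies, both leading to the same $\Omega(k\sqrt{Y})$ lower bound on the gap, will likely be the most delicate part of the argument.
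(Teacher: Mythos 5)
Your scaffolding --- Yao's minimax principle to reduce to deterministic no-lookahead strategies, plus Lemma~\ref{lem:lb-helper} to convert consumption variance into a quadratic utility deficit --- is exactly the paper's, but your instance construction has two concrete problems. First, feasibility and attribution: under the budget constraint $c_t \le a_t$ the lookahead player cannot ``allocate a burst's value uniformly across a window of length $k$'' that starts before the burst arrives, because income cannot be consumed before it is received; and smoothing a burst forward \emph{after} arrival requires no lookahead at all, so the $\sqrt{kY}$-versus-$\sqrt{Y}$ comparison you describe does not separate the two players. The genuine advantage of lookahead is informational --- knowing the \emph{level or timing} of future income so that current assets are rationed at the correct rate --- and sparse bursts at random positions do not isolate that cleanly. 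Second, the quantitative accounting does not close: with a burst of mass at most $Y$ spread over $k$ steps, per-step consumption is $\Theta(Y/k)$, far outside the constant-order regime $a \in (1/2,1)$, $w \in (0,1)$ where the lemma applies, and the total penalty extractable in any one window is capped by the lookahead player's own utility there, $O(\sqrt{kY})$. Reaching $\Omega(k\sqrt{Y})$ would therefore require $\Omega(\sqrt{k})$ windows in each of which the deterministic player provably mis-allocates by a constant factor at essentially every step; nothing in your construction forces this, since once a burst has arrived both players hold identical information. The ``conservative, defer-until-arrival'' strategy you flag is a symptom of this hole, not a delicate edge case.

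The paper's construction avoids both issues by making the income \emph{dense}: $y_t = Y$ for $t \le k/2$ and $y_t = xY$ for $k/2 < t \le k$, with $x$ uniform on $[0,1]$ and $T = k$. The lookahead player consumes $\frac{(1+x)Y}{2}$ at every step (feasible because the first half runs a surplus), earning $k\sqrt{(1+x)Y/2}$. A deterministic no-lookahead strategy must commit to its first-half consumption total $S$ before learning $x$, and with probability at least $1/3$ one has $\left| S - \frac{k}{2}\cdot\frac{(1+x)Y}{2} \right| \ge \frac{kY}{12}$; the per-step deviations are then of constant order (after normalizing $Y=1$), Cauchy--Schwarz gives $\sum_t (z_t - a)^2 = \Omega(k)$, and Lemma~\ref{lem:lb-helper} delivers the $\Omega(k\sqrt{Y})$ deficit. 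Note that this construction also disposes of your conservative case automatically: under-consuming in the first half is just a large value of $\left|S - \frac{k}{2}a\right|$ and is penalized by the same inequality.
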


\begin{proof}

\cite{nokhiz2024modeling,nokhiz2025counting,nokhiz2025counting1} assumes that both individuals start with $ a_1 = 0 $. Also, since incomes can be scaled without loss of generality, the assumption is that $Y=1$. Let $y_t$ be defined as:
\[
y_t = 
\begin{cases} 
1 & \text{for } t \leq k/2, \\
x & \text{for } k/2 < t \leq k,
\end{cases}
\]
where $ x $ is uniformly sampled from the interval $[0,1]$. Both individuals are aware of this input distribution. 

For simplicity, they assume that the total time horizon $ T $ equals $ k $. This assumption can be relaxed by setting $ y_t = \frac{(1+x)}{2}$ for all $ t > k $. 

Firstly, they examine the individual using $k$-step lookahead. Since this individual can observe the value of $x$, they opt to consume an amount of $\frac{1+x}{2}$ at each time step. This strategy is viable because the income during the first $k/2$ steps is $y_t = 1$, ensuring their assets are adequate to sustain this consumption. As a result, the total utility (given that $T = k$) amounts to $k \sqrt{\frac{1+x}{2}}$.

Next, they examine the individual without lookahead. Intuitively, this person is unable to predict the value of $x$, making it impossible to consistently consume $ \frac{1+x}{2} $. \cite{nokhiz2024modeling,nokhiz2025counting,nokhiz2025counting1} shows that even with complex strategies, the individual without lookahead cannot achieve a high total utility.

They apply Yao's min-max theorem~\cite{motwani-raghavan}, which asserts that for a specific input distribution, the best algorithm for that distribution is deterministic. Thus, to derive the required lower bound, it suffices to examine deterministic algorithms and evaluate the expected difference in total utility across the random choice of $x$. For any deterministic algorithm, during the time steps $ 1, \dots, (k/2) $, the algorithm observes an income of $ 1 $, meaning the consumption values $ z_1, z_2, \dots, z_{(k/2)} $ are predetermined. Let $ S = z_1 + z_2 + \dots + z_{(k/2)} $.

Consider the scenario where the following condition holds:
\begin{equation}\label{eq:lb-good-condition}
\left| S - \frac{k}{2} \cdot \frac{1+x}{2} \right| > c \cdot k,
\end{equation}
for some parameter $ c $. In this case, \cite{nokhiz2024modeling,nokhiz2025counting,nokhiz2025counting1} demonstrates that $ \sum_i \sqrt{z_i} $ is significantly less than $ k \sqrt{\frac{1+x}{2}} $.

Let $ a = \frac{1+x}{2} $. From the assumption, it is known that $ | z_1 + z_2 + \dots + z_{k/2} - \frac{k}{2} a| > ck $, which implies 
\begin{equation}\label{eq:lb-variation} 
\sum_{i \le k} |z_i - a| > ck.
\end{equation}
Using Lemma~\ref{lem:lb-helper}:
\[  
\sum_{i \le k} \sqrt{z_i} \le k \sqrt{a} + \frac{1}{2\sqrt{a}} \sum_{i \le k} (z_i -a) - \frac{1}{8} (z_i - a)^2. 
\]
Since the total consumption cannot exceed the total income (equal to $ ka $), the middle term on the right is $ \le 0 $. This simplifies to:
\[ 
\sum_{i \le k} \sqrt{z_i} \le k \sqrt{a}  - \frac{1}{8} (z_i - a)^2. 
\]
Using the Cauchy-Schwarz inequality and~\eqref{eq:lb-variation}:
\[ 
\sum_i (z_i - a)^2 \ge \frac{1}{k} \left( \sum_{i} |z_i -a| \right)^2 > c^2 k.
\]
Combining these inequalities, it follows that:
\[ 
\sum_i \sqrt{z_i} \le k\sqrt{a} - \frac{c^2 k}{8}.
\]
This result demonstrates that if the deterministic algorithm chooses $z_i$ values that satisfy \eqref{eq:lb-good-condition}, the no-lookahead algorithm results in a total utility that is $ \Omega(k) $ worse than that of the lookahead algorithm.

To complete the proof, \cite{nokhiz2024modeling,nokhiz2025counting,nokhiz2025counting1} shows that if $ x $ is chosen uniformly at random from $ (0,1) $, the condition~\eqref{eq:lb-good-condition} holds with a constant probability for some $ c > 0 $. Since $ S $ is fixed, then:
\[ 
\left| \frac{2S}{k} - \frac{1+x}{2} \right| > 2c, 
\]
or equivalently:
\[ 
\left| \frac{4S}{k} - 1 - x \right| > 4c. 
\]
For any fixed $ \alpha $, if $ x \sim_{\text{uar}} (0,1) $, the probability that $ |\alpha - x| \le 1/3 $ is at most $ 2/3 $. Therefore, the condition holds with $ c = 1/12 $ and with a probability of at least $ 1/3 $.

In summary, with a probability of $1/3$, the no-lookahead algorithm performs $\Omega(k)$ worse than the lookahead algorithm and cannot achieve better performance. Therefore, the \emph{expected} utility gap is also $\Omega(k)$. By applying Yao's min-max theorem, this lower bound holds for any (potentially randomized) algorithm as well.
\end{proof}

\section{Modeling Debt and Behavioral Anomalies}
\label{sec:app-debttt}

An important direction for future work lies in extending our model to incorporate consumer debt and behavioral deviations from standard discounted utility theory, such as hyperbolic discounting. These features play a central role in real-world consumption decisions and would enrich the theoretical analysis of agency under financial constraint.

Debt, particularly in the form of credit cards, Buy Now Pay Later (BNPL) schemes, and payday loans, serves as a mechanism for smoothing consumption over time when current income or assets are insufficient. In classical intertemporal models, allowing for debt would relax the constraint $c_t \leq a_t $
  and introduce borrowing constraints or interest-dependent repayment obligations. However, this also introduces new risks: under limited agency (e.g., impulsive purchases or unpredictable income), access to debt can amplify financial fragility, increase the likelihood of ruin, and create feedback loops where short-term choices impose long-term burdens. For example, manipulated urgency or misleading repayment terms can cause consumers to overborrow without fully anticipating the long-run costs.

Behavioral anomalies, such as hyperbolic discounting {\cite{ainslie1992hyperbolic, frederick2002time}}, refer to empirically observed patterns where individuals heavily overweight immediate rewards compared to future outcomes, far more than exponential discounting predicts. Formally, hyperbolic discounting modifies the temporal weight on future utility such that the discount factor becomes time-inconsistent: individuals may plan to save in the future but fail to follow through when the future becomes the present. This deviation leads to dynamic inconsistency, myopic spending, and a stronger susceptibility to manipulative design features like flash sales, scarcity cues, or short-term promotions. In the presence of algorithmic persuasion, hyperbolic discounting can exacerbate consumer vulnerability by aligning system-level nudges with consumers’ inherent present bias. 

While both debt modeling and hyperbolic discounting are highly relevant to the broader discourse on consumption agency, we deliberately chose not to include them in our current formalism for tractability and interpretability. Our primary objective was to isolate and analyze the structural impact of limited agency (particularly under algorithmic and systemic constraints) within a clean, utility-maximizing framework. Including debt dynamics or non-standard temporal discounting would add additional modeling complexity; nonetheless, these extensions are conceptually aligned with our goals and represent promising avenues for further study.

\vspace{0.5cm}

\section{Additional Experimental Analysis}
\label{sec:app-all-exp}
In this section, using our model, we explore four different sets of experiments regarding the algorithmic persuasion and ruin, behaviors of different income classes, behaviors of different individuals based on education level (to avoid social uniformity and explore more diverse subgroups that could potentially be underprivileged), and work scheduling effects on financial utility.

\subsection{Experimental Setup}\label{sec:exp-setup}

We begin by describing the key components of our experimental setup.

\textbf{Model of Computation:} Our experiments are based on the computational framework introduced in \S{\ref{sec:main-model}}. This model employs value iteration to compute optimal consumption paths, which are then used in the simulations to determine final consumption and asset trajectories. The discounting factor $\beta$ is set to 0.95 based on what is commonly practiced in literature in general settings \cite{patnaik2022role, cooper2014discounting}.

\textbf{Assets and Income:} The initial asset level is set at \$141{,}140, corresponding to the median net worth reported by the Federal Reserve in 2019~{\cite{fedres}}. Monthly income is drawn from a log-normal distribution with a mean of \$5{,}957.25 and a standard deviation of \$378.74, derived from annual income statistics provided by the U.S. Bureau of Labor Statistics for 2019~{\cite{minsub}}.

To explore the impact of income level, we also simulate both low- and high-income scenarios in {\S\ref{sec:app-income-expt}}. We consider the individuals with income in the range from \$1,250 to \$2,500 as the low-income individuals and the individuals with income in the range from \$8,334 to \$12,500 as the high-income individuals {\cite{minsub}}. The low-income group receives income from a distribution with a mean of \$1{,}899.33 and standard deviation of \$77, while the high-income group’s distribution has a mean of \$8{,}869.92 and standard deviation of \$199.60~{\cite{minsub}}.

\textbf{Simulation Setup and Plots:} Each experiment simulates 50{,}000 agents, with each agent planning over a 100-month horizon. The point of ruin is defined as the first month in which an agent’s asset level falls below zero. From these simulations, we generate histograms representing the distribution of ruin times across agents.

\vspace{0.2cm}

\subsection{Algorithmic Persuasion, Impulsive Consumption and Ruin}

In this experiment, we investigate how algorithmic persuasion and impulsive consumption, under minimum subsistence conditions, contribute to financial ruin. The model and experimental parameters follow the standard configuration detailed in \S{\ref{sec:exp-setup}} with the general income/asset setting for the entire underlying population with no income classes defined yet.  The mean expenditure value is \$5{,}253 per month based on the statistics provided by the U.S. Bureau of Labor Statistics for 2019~{\cite{minsub}}. Figure~{\ref{fig:ruin-algorithm}} presents the distribution of ruin times for the algorithmic persuasion and impulsive consumption scenario.

\begin{figure}[ht]
    \centering
\includegraphics[width=0.4\columnwidth]{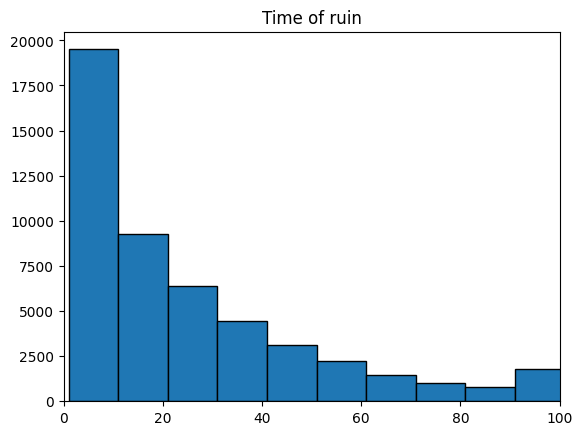}
    \caption{The figure shows the distribution of ruin times under algorithmic persuasion and impulsive consumption. The experiment simulates 50,000 agents, and the resulting distribution reflects the time at which each agent experiences ruin. The $x$-axis denotes the time step (month) when ruin occurs, while the $y$-axis indicates the number of agents who go to ruin at each time step.}
    \label{fig:ruin-algorithm}
\end{figure}

\textbf{Analysis:} Figure~{\ref{fig:ruin-algorithm}} illustrates that agents operating under minimum subsistence constraints tend to experience ruin within a short number of time steps. A significant fraction of agents reach ruin within the first 10 steps, highlighting the heightened financial vulnerability caused by impulsive consumption in such conditions. This rapid onset of ruin underscores the short-term instability faced by these agents. The frequency of ruin decreases steeply over time, indicating that as the simulation progresses, fewer agents manage to maintain solvency. Nonetheless, under the standard setup, a subset of agents is able to delay ruin significantly, with some surviving until the end of the planning horizon.

\subsection{Diverse Subgroups: Low Income vs. High Income}
\label{sec:app-income-expt}
In this experiment, we examine how income influences the likelihood of financial ruin. The motivation is to consider analyzing varied demographic segments and striving to prevent assumptions of social homogeneity. 

We consider agents with low-income and agents with high-income as defined in \S{\ref{sec:exp-setup}}. For these income-specific scenarios, the mean expenditure is \$2{,}850 for the low-income group and \$7{,}082.83 for the high-income group, following the same 2019 statistics~{\cite{minsub}}. Other settings are similar to \S{\ref{sec:exp-setup}}.

Our computational behavioral model incorporates a discount factor $\beta$, which represents the consumer's level of patience. While in general settings, more commonly used values like 0.95 can be used {\cite{patnaik2022role,cooper2014discounting},  this parameter has been studied in more depth. Findings suggest that $\beta$ can vary with income level. Thus, for individuals with a low income we set a lower discount factor (around $\beta = 0.5$), whereas those with a high income we set a higher discount factor (around $\beta = 0.9$) per prior research on $\beta$ for these income classes~{\cite{ericson2019intertemporal}}. In Figure~{\ref{fig:ruin-income}} we provide the distributions of the time of ruin for the low-income and high-income individuals.

\vspace{0.4cm}
 
\begin{figure}[ht]
    \centering

    \begin{minipage}{0.45\textwidth}
        \centering
        \includegraphics[width=0.95\textwidth]{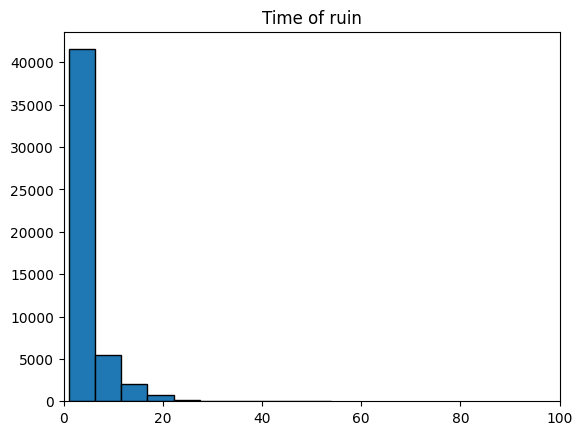} \\
        \small High-income
    \end{minipage}
    \hfill
    \begin{minipage}{0.45\textwidth}
        \centering
        \includegraphics[width=0.95\textwidth]{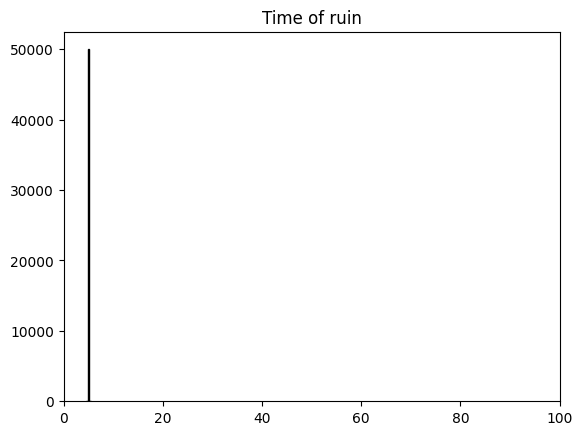} \\
        \small Low-income
    \end{minipage}

    \caption{The figure shows the distributions of ruin times for low income households and high income households. The experiment simulates 50,000 agents, and the resulting distribution reflects the time at which each agent experiences ruin. The $x$-axis denotes the time step (month) when ruin occurs, while the $y$-axis indicates the number of agents who go to ruin at each time step.}
    \label{fig:ruin-income}
\end{figure}

\textbf{Analysis:} Figure~{\ref{fig:ruin-income}} illustrates the distribution of time to ruin under impulsive consumption for agents with low and high income levels. In the low-income scenario, nearly all 50,000 agents go to ruin immediately within the first time step. The histogram is sharply concentrated at the origin with no meaningful tail, indicating extreme financial fragility. This suggests that under low income and impulsive behavior, agents are incapable of sustaining even minimal subsistence, leading to near-instantaneous collapse. 

In contrast, while many high-income agents also experience early ruin, the distribution shows noticeably more spread, with a non-trivial number of agents surviving for longer durations. This reflects the buffering effect of higher income: though impulsive consumption still leads most high-income agents to financial ruin relatively quickly, greater income provides some capacity to delay collapse. Overall, the results highlight that income level influences time to ruin. Low-income agents exhibit much lower financial resilience, while high-income agents display a modest degree of resilience.

\subsection{Diverse Subgroups: High School vs. College Graduates}

In this experiment, we examine another set of experiments to explore the effects of impulsive consumption for different subgroups. Specifically, we analyze how educational attainment affects financial stability based on the effects it has on $\beta$ in consumption decisions (and how these differences can lead agents to experience ruin).

Once again, the discount factor $\beta$ has been studied in the prior literature, and findings suggest that $\beta$ may vary with education. In particular, individuals with a high school education typically exhibit a lower discount factor (around $\beta = 0.5$), whereas those with a college degree tend to have a higher value (around $\beta = 0.83$)~{\cite{ericson2019intertemporal}}. As shown in prior research {\cite{nokhiz2025counting}}, people with higher education tend to have higher income, more financial flexibility, and a resultant financial utility, which indicates that the lower-education group is the underprivileged group in this setting.

The experiment setup is similar to the general setting in \S{\ref{sec:exp-setup}}. In Figure~{\ref{fig:ruin-education}} we provide the distributions of the time of ruin for the high-school graduates and college graduates.

\begin{figure}[ht]
    \centering

    \begin{minipage}{0.45\textwidth}
        \centering
        \includegraphics[width=0.95\textwidth]{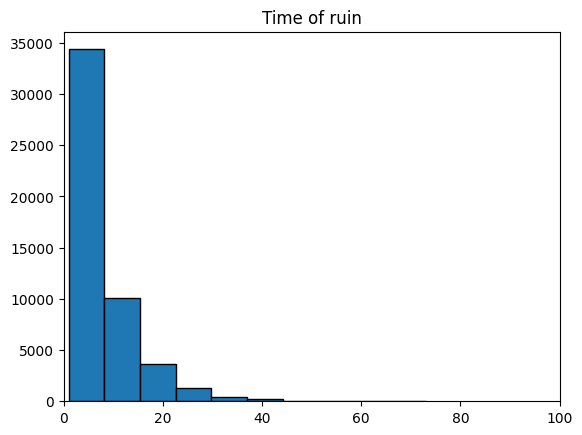} \\
        \small College degree
    \end{minipage}
    \hfill
    \begin{minipage}{0.45\textwidth}
        \centering
        \includegraphics[width=0.95\textwidth]{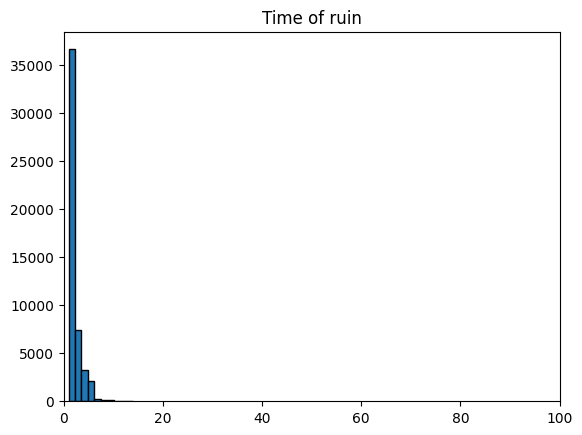} \\
        \small High-school diploma
    \end{minipage}

    \caption{The figure shows the distributions of ruin times for college degree holders and high school diploma holders. The experiment simulates 50,000 agents, and the resulting distribution reflects the time at which each agent experiences ruin. The $x$-axis denotes the time step (month) when ruin occurs, while the $y$-axis indicates the number of agents who go to ruin at each time step.}
    \label{fig:ruin-education}
\end{figure}

\textbf{Analysis:} Figure~{\ref{fig:ruin-education}} illustrates the distribution of time to ruin under impulsive consumption for agents with high school diploma and college degree. In the high school diploma scenario, nearly all 50,000 agents go to ruin very quickly within the first 20 time steps. The histogram indicates financial fragility. This suggests that underprivileged groups (here with lower $\beta$ in the impulsive case) are incapable of efficiently sustaining themselves, leading to sudden collapse. 

In contrast, while many college degree-holding agents also experience early ruin, the distribution shows significant spread, with a significant number of agents surviving for longer durations. This reflects the differences that consumption bounds could potentially have on diverse education level subgroups. 

\vspace{0.5cm}

Overall, the results highlight that various factors representing different subpopulations (like $\beta$ here for group differentiation to avoid social uniformity) can impact time to ruin.

\subsection{Work Schedule Instability Analysis}

Prior work \cite{nokhiz2024modeling,nokhiz2025counting, nokhiz2025counting1} show both empirically and formally that underprivileged workers (particularly those with limited foresight into their future schedules) experience significantly lower financial utility compared to those with greater lookahead. The ability to anticipate upcoming work hours allows individuals to better plan consumption and savings, ultimately resulting in higher utility. Additionally, the effect of income is straightforward: higher earnings allow individuals to meet consumption needs more comfortably due to their higher utility scale across all lookahead conditions. Moreover, disadvantaged groups holding more rapidly depreciating assets have an even greater need for advance information about their work schedules, given their already vulnerable financial standing.

These studies also reveal that employees who are given advanced notice of schedule changes based on regulations like the Schedules that Work Act of 2014 (H.R. 5159) and Retail Workers Bill of Rights in San Francisco {\cite{golden2015irregular}} experience improved financial well-being. Advanced planning allows workers with variable schedules to gain greater autonomy over their financial planning. That is, they can manage their consumption and savings more effectively, which leads to better financial decision-making and agency.

\end{document}